\newtheorem{theorem}{Theorem}
\newtheorem{lemma}{Lemma}
\theoremstyle{definition}
\newtheorem{remark}{Remark}
\newcommand{\I}{{\mathrm{I}}}
\newcommand{\tr}{{\mathrm{tr}}}
\newcommand{\TV}{{\mathrm{TV}}}
\newcommand{\E}{{\mathrm{E}}}
\newcommand{\pcite}[1]{\citeauthor{#1}'s \citeyearpar{#1}}
\def\B{\mathcal{B}} 
\begin{document}

\title{Block Gibbs samplers for logistic mixed models: convergence properties and a comparison with full Gibbs samplers
}


\author{Yalin Rao         and
  Vivekananda Roy\\
    Department of Statistics, Iowa State
  University, USA
}



 \date{}

\maketitle

\begin{abstract}
  The logistic linear mixed model (LLMM) is one of the most widely
  used statistical models. Generally, Markov chain Monte Carlo
  algorithms are used to explore the posterior densities associated
  with the Bayesian LLMMs. \pcite{polson2013bayesian} P\'{o}lya-Gamma
  data augmentation (DA) technique can be used to construct full Gibbs
  (FG) samplers for the LLMMs. Here, we develop efficient block Gibbs
  (BG) samplers for Bayesian LLMMs using the P\'{o}lya-Gamma DA
  method. We compare the FG and BG samplers in the context of a real
  data example, as the correlation between the fixed effects and the
  random effects changes as well as when the dimensions of the design
  matrices vary.  These numerical examples demonstrate superior
  performance of the BG samplers over the FG samplers. We also derive
  conditions guaranteeing geometric ergodicity of the BG Markov chain
  when the popular improper uniform prior is assigned on the
  regression coefficients, and proper or improper priors are placed on
  the variance parameters of the random effects. This theoretical
  result has important practical implications as it justifies the use
  of asymptotically valid Monte Carlo standard errors for Markov chain
  based estimates of the posterior quantities.
\end{abstract}
    \noindent {\it Key words:}
Data augmentation; drift
    condition; geometric ergodicity; GLMM; Markov chain
    CLT; MCMC; standard errors
\section{Introduction}
\label{sec:int}
The logistic linear mixed model (LLMM) is an extensively used
generalized linear mixed model for binary response data. Let
$(Y_{1},Y_{2},...,Y_{n})$ denote the vector of Bernoulli
responses. Let $X$ and $Z$ be the $n \times p$ and $n \times q$ known
design matrices corresponding to the fixed effects and the random
effects, respectively. Suppose $x_{i}^\top$ and $z_{i}^\top$ indicate
the $i^{th}$ row of $X$ and $Z$, respectively, for $i = 1,...,n$. Let
$\beta \in \mathbb{R}^{p}$ be the regression coefficients vector and
$u \in \mathbb{R}^{q}$ be the random effects vector. In general, a
generalized linear mixed model (GLMM) can be built with a link
function that connects the probability that the response variable $Y$
equals 1 (that is, the expectation of $Y$) with $X$ and $Z$. For the
LLMM, $P(Y_{i} = 1) = F( x_{i}^\top \beta + z_{i}^\top u) $, where $F$
indicates the cumulative distribution function for the standard
logistic random variable, that is,
$F(t) = e^{t}/(1+e^{t}), \, t \in \mathbb{R}$. Also, assume that we
can divide $u$ into $r$ independent random effects. Let
$u = (u_{1}^\top,\dots,u_{r}^\top)^\top$, where $u_{j}$ is a
$q_{j} \times 1$ vector with $q_{j} > 0$, $j=1,\dots,r,$ and
$\sum_{j=1}^{r} q_{j} = q$. Assume,
$u_{j} \overset{ind}{\sim} N(0, (1/\tau_{j})\I_{q_{j}})$, where
$\tau_{j} > 0,$ for $j=1,\dots,r$. Let $\tau = (\tau_{1},...,\tau_{r})$. Thus the data
model for the LLMM is
\begin{align}
\label{eq:disu}
 Y_{i} \mid \beta,u,\tau &\overset{ind}{\sim} Ber(F(x_{i}^\top \beta + z_{i}^\top u))  \quad  \text{for} \quad i = 1,...,n, \nonumber \\ 
u_{j} \mid \tau_{j} &\overset{ind}{\sim} N(0, (1/\tau_{j})  \I_{q_{j}}), \quad j = 1,...,r.
\end{align} 

Let $y = (y_{1},y_{2},...,y_{n})$ denote the vector of observed Bernoulli responses. Then, the likelihood function for $(\beta,\tau)$ is 
\begin{align}
\label{eq:likef}
L(\beta,\tau \mid y)= \int_{\mathbb{R}^{q}}\prod_{i=1}^{n} \big{[}F(x_{i}^\top \beta + z_{i}^\top u)\big{]}^{y_{i}} \big[1- F(x_{i}^\top \beta + z_{i}^\top u) \big]^{1-y_{i}} \phi_{q}(u;0,D(\tau)^{-1}) du,
\end{align}
where $D(\tau)^{-1} = \oplus_{j=1}^{r} \frac{1}{\tau_{j}} \I_{q_{j}}$,
and $\oplus$ indicates the direct sum. Here,
$\phi_{q}(u;0,D(\tau)^{-1})$ denotes the probability density function
of the $q$-dimensional normal distribution with mean vector $0$,
covariance matrix $D(\tau)^{-1}$, and evaluated at $u$.

When working in a Bayesian framework, one must specify priors for $\beta$ and $\tau$. Here, we consider the prior for $\beta$ as given by
\begin{align}
\label{eq:betaprior}
\pi(\beta) \propto \exp \big{[}-\frac{1}{2}(\beta-\mu_{0})^\top Q(\beta-\mu_{0}) \big{]},
\end{align}
where $\mu_{0} \in \mathbb{R}^{p}$ and $Q$ is a $p \times p$ positive definite matrix (proper normal prior) or a zero matrix (improper uniform prior). Thus, if $Q = 0$, then $\pi(\beta) \propto 1 $. The prior for $\tau_{j}$ is 
\begin{align}
\label{eq:tauprior}
\pi(\tau_{j}) \propto \tau_{j}^{a_{j}-1}e^{-\tau_{j}b_{j}}, \;j = 1,...,r,
\end{align} which may be proper or improper depending on the values of $a_{j}$ and $b_{j}$. 
Finally, we assume that $\beta$ and $\tau$ are apriori independent, and all the $\tau_{j}$s are also apriori independent. Hence, the joint posterior density for $(\beta,\tau)$ is 
\begin{align}
\label{eq:jointp1}
\pi(\beta,\tau \mid y) = \frac{1}{c(y)}L(\beta,\tau \mid y)\pi(\beta)\pi(\tau),
\end{align}
where
$c(y) = \int_{\mathbb{R}_{+}^{r}} \int_{\mathbb{R}^{p}} L(\beta,\tau
\mid y)\pi(\beta)\pi(\tau) d\beta d\tau$ is the marginal pmf of
$y$. If $c(y)$ is finite, then the posterior density is proper. Since
we consider both proper and improper priors on $(\beta, \tau)$, if
improper priors are used, then $c(y)$ is not necessarily
finite. Conditions for the posterior propriety of nonlinear mixed models
with general link functions are given in \cite{chen2002necessary} and
\cite{wang2018convergence}. Theorem~\ref{theoremimproper}, given in
Section~\ref{sec:ge} of this paper, provides easily verifiable sufficient conditions
for propriety of \eqref{eq:jointp1} when $\pi(\beta) \propto 1 $.

Since the likelihood function $L(\beta,\tau \mid y)$ is not available
in a closed form, the posterior density for $(\beta,\tau)$ is not
tractable for any choice of priors on these parameters. Markov chain
Monte Carlo (MCMC) algorithms can be used to explore the posterior
density $\pi(\beta, \tau|y)$. Even in the absence of the random
effects, MCMC algorithms are generally used for exploring the
posterior densities corresponding to the basic logistic model or other
generalized linear models (GLMs). Using the data augmentation (DA)
technique \citep{vand:meng:2001}, in a highly cited paper,
\cite{albert1993bayesian} constructed a Gibbs sampler for GLMs with
the probit link. Since then there have been several attempts to
construct such a DA Gibbs sampler for the logistic model (see
e.g. \cite{holmes2006bayesian} and
\cite{fruhwirth2010data}). Recently, using the P\'{o}lya-Gamma (PG)
latent variables, \cite{polson2013bayesian} (denoted as PS\&W
hereafter) have proposed an efficient DA Gibbs sampler for the
Bayesian logistic models. A random variable $\omega$ follows a PG
distribution with parameters $a>0, \, b \ge 0$, if
$\omega \overset{d}{=} (1/(2\pi^2)) \sum_{i =1}^{\infty} g_{i} /
[(i-1/2)^2 +b^2/(4\pi^2)]$, where
$g_{i} \overset{iid}{\sim}$Gamma$(a,1)$. We denote this as
$\omega \sim $PG$(a,b)$. PS\&W's DA technique can be extended to
construct a Gibbs sampler for the LLMMs. Indeed, with PG latent variables
$\omega = (\omega_{1},\omega_{2},...,\omega_{n})$, one can construct a
joint posterior density $\pi(\beta,u,\omega,\tau \mid y)$ (details are
given in Section \ref{sec:BGFG}) such that \vspace*{-.06in}
\begin{align}
\label{eq:relationtwopos}
\int_{\mathbb{R}^{q}}\int_{\mathbb{R}_{+}^{n}} \pi(\beta,u,\omega,\tau \mid y) d \omega\, d u = \pi(\beta, \tau \mid y),
\end{align}
where $\mathbb{R}_{+} = (0,\infty)$, and $\pi(\beta, \tau \mid y)$ is
given in \eqref{eq:jointp1}. Using the conditional distributions of
the joint density $\pi(\beta,u,\omega,\tau \mid y)$, a full Gibbs
sampler can be formed (The details for this Gibbs sampler are given in
Section \ref{sec:FG}.). It is known that blocking
parameters can improve the performance of the Gibbs sampler in terms
of reducing its operator norm \citep{liu1994covariance}. In general,
when one or more variables are correlated, sampling them jointly can
improve the efficiency of MCMC algorithms
\citep{robe:sahu:1997, chib:rama:2010, ture:deva:paci:2017}. On the other hand, blocking may result in
complex conditional distributions that are not easy to sample
from. For the LLMMs, it turns out that an efficient two-block Gibbs
sampler can be constructed by using the two blocks, $\eta \equiv (\beta^\top,u^\top)^\top$ and $(\omega,\tau)$. We derive this block
Gibbs sampler in Section \ref{sec:blockgibbs}. Using numerical examples, we show that blocking can lead to great gains
in efficiency in Monte Carlo estimation for the LLMMs.

The block Gibbs Markov chain is Harris ergodic. Thus, the sample
(time) averages are consistent estimators of the means with respect to the
posterior density \eqref{eq:jointp1}. On the other hand, in practice,
it is important to ascertain the errors associated with the Monte Carlo
estimates. A valid standard error for the Monte Carlo estimate can be
formed if a central limit theorem (CLT) is available for the time
average estimator \citep{jones2001honest}. Establishing geometric
ergodicity (GE) of the underlying Markov chain is the most standard
method for guaranteeing CLTs hold for MCMC estimators, and is also
used for consistently estimating the asymptotic variance in the CLT
(\cite{vats2018strong}, \cite{vats2019multivariate}). GE of Gibbs
samplers for probit and logistic GLMs under different priors has been
established in the literature \citep{roy2007convergence,
  choi:hobe:2013, chak:khar:2017, wang2018geometric}. Also, GE of
Gibbs samplers for probit mixed models and normal linear mixed models
under improper priors on the regression coefficients and variance
components is considered in \cite{wang2018convergence} and
\cite{roman2012convergence}, respectively. From \cite{roy:2012a} it
follows that \pcite{wang2018convergence} GE result also holds for
parameter expansion for DA algorithms for the probit mixed
models. \cite{wang2018analysis} prove uniform ergodicity of a Gibbs
sampler for the LLMMs under a proper normal prior on $\beta$ and a
truncated proper prior on $\tau$. \cite{wang2018analysis} establish
a {\it minorization condition} to prove uniform ergodicity of the
Gibbs sampler \cite[see e.g.][Theorem
8]{roberts2004general}. \pcite{wang2018analysis} analysis of the Gibbs
Markov chain puts a constraint on the support of the posterior density
\eqref{eq:jointp1}, in that the parameter $\tau$ is bounded away from
zero. Here, using a {\it drift condition}, we establish geometric
convergence rates for the block Gibbs sampler, in the case when the
popular improper uniform prior is assigned on $\beta$ and the commonly used proper
gamma priors or the improper power priors are assigned on $\tau$. In
contrast to \cite{wang2018analysis}, our result does not put any
restriction on the support of the variance components, that is, the
common support of $\tau_j$'s is the entire positive real line.

 The rest of the article is organized as follows. In Section
\ref{sec:BGFG}, we provide details on PG data augmentation and
construct the full and the block Gibbs samplers. Section \ref{sec:numerical} contains
numerical examples. These examples are used to compare the performance
of the block and the full Gibbs samplers. In Section \ref{sec:ge}, we
consider geometric convergence of the block Gibbs sampler under improper priors. Some concluding remarks are
provided in Section \ref{sec:discussion}. Finally, several theoretical
results along with proofs of the results appear in the appendices.

\section{Gibbs samplers}
\label{sec:BGFG}
In this section, using the PG variables, we discuss DA for the LLMMs, and construct Gibbs samplers for \eqref{eq:jointp1}. Following \eqref{eq:likef} and \eqref{eq:jointp1}, the joint posterior density for $(\beta,\tau)$ is
\begin{align*}
\pi(\beta,\tau \mid y)= \frac{\pi(\beta)\pi(\tau)}{c(y)}\int_{\mathbb{R}^{q}}\prod_{i=1}^{n} \frac{\exp\{y_{i} (x_{i}^\top \beta + z_{i}^\top u)\}}{1+\exp(x_{i}^\top \beta + z_{i}^\top u)} \phi_{q}(u;0,D(\tau)^{-1})du. \end{align*} 
By Theorem 1 in \cite{polson2013bayesian}
\begin{align}
\label{eq:jointp2}
\pi(\beta,\tau \mid y) = \int_{\mathbb{R}^{q}}\int_{\mathbb{R}_{+}^{n}} \bigg[\prod_{i=1}^{n}& \frac{ \exp\{k_{i}(x_{i}^\top \beta + z_{i}^\top u)-\omega_{i}(x_{i}^\top\beta + z_{i}^\top u)^{2}/2\}}{2}p(\omega_{i})\bigg] d \omega \; \nonumber\\ & \times \phi_{q}(u;0,D(\tau)^{-1})du \times \frac{\pi(\beta)\pi(\tau)}{c(y)},
\end{align}
where $\omega=(\omega_{1},\omega_{2},...\omega_{n})$, $k_{i} = y_{i} -1/2, \, i = 1,...,n$ and $p(\omega_{i})$ is the pdf of $\omega_{i} \sim $PG$(1,0)$ given by, 
\begin{equation}
\label{eq:pg1}
p(\omega_{i})= \sum_{\ell = 0}^{\infty} (-1)^{\ell} \frac{(2\ell+1)}{\sqrt{2\pi \omega_{i}^{3}}} \exp\bigg[-\frac{(2\ell+1)^{2}}{8\omega_{i}}\bigg],\,\omega_{i} > 0.
\end{equation}
We now define the joint posterior density of $\beta,u,\omega,\tau$ given $y$ mentioned in \eqref{eq:relationtwopos} as 
\begin{align}
\label{eq:jointp3}
\pi(\beta,u,\omega,\tau \mid y) &\propto \bigg[\prod_{i=1}^{n}  \exp\{k_{i}(x_{i}^\top \beta + z_{i}^\top u) -\omega_{i}(x_{i}^\top \beta + z_{i}^\top u)^{2}/2\}p(\omega_{i}) \bigg] \pi(\beta)\pi(\tau) \nonumber\\ & \quad \times \phi_{q}(u;0,D(\tau)^{-1})  \nonumber \displaybreak\\
& = \bigg[\prod_{i=1}^{n}  \exp\{k_{i}(x_{i}^\top \beta + z_{i}^\top u)-\omega_{i}(x_{i}^\top \beta + z_{i}^\top u)^{2}/2\} p(\omega_{i})\bigg]
 \nonumber\\ & \quad \times \phi_{q}(u;0,D(\tau)^{-1}) \times \exp\Big[-\frac{1}{2}(\beta-\mu_{0})^\top Q(\beta-\mu_{0})\Big] \nonumber\\ & \quad\times \prod_{j=1}^{r} \tau_{j}^{a_{j}-1}\exp(-b_{j}\tau_{j})  ,
\end{align}
where \eqref{eq:jointp3} follows from the priors on $\beta$ and $\tau$ given in \eqref{eq:betaprior} and \eqref{eq:tauprior}, respectively.

\subsection{A full Gibbs sampler}
\label{sec:FG}
Let $\Omega(\omega)$ be the $n \times n$ diagonal matrix with the $i^{th}$ diagonal element $\omega_{i}$. Let $M=[X Z]$, and $m_{i}^\top$ indicates the $i^{th}$ row of $M$ for $i=1,\dots,n$. Let $\kappa = (k_{1},k_{2},...,k_{n})^\top$.  
We begin with deriving the conditional densities required for the full Gibbs (FG) sampler. Based on \eqref{eq:jointp3}, the conditional density of $\beta$ given $u, \omega, \tau, y$ is 
\begin{align*}
\pi(\beta \mid  u, \omega, \tau, y) &\propto \prod_{i=1}^{n} \exp\Big[k_{i} x_{i}^\top \beta-\omega_{i}(x_{i}^\top \beta)^{2}/2 - \omega_{i}x_{i}^\top \beta z_{i}^\top u \Big]\nonumber\\ & \quad \times \exp\Big[-\frac{1}{2}(\beta-\mu_{0})^\top Q(\beta-\mu_{0})\Big] \\
&\propto \exp \Big[ -\frac{1}{2}\beta^\top (X^\top \Omega(\omega) X + Q)\beta+ \beta^\top (X^\top \kappa + Q \mu_{0} - X^\top \Omega(\omega) Z u) \Big].
\end{align*}
Hence, 
\begin{align}
\label{eq:pbeta}
\beta \mid u,\omega, \tau, y \sim N((X^\top \Omega(\omega) X + Q)^{-1} (X^\top \kappa + Q \mu_{0} - X^\top \Omega(\omega) Z u),(X^\top \Omega(\omega) X+ Q)^{-1}).
\end{align}
Also from \eqref{eq:jointp3}, the conditional density of $u$ given $\beta, \omega, \tau, y$ is
\begin{align*}
\pi(u \mid \beta, \omega, \tau, y) &\propto \prod_{i=1}^{n} \exp\Big\{ k_{i} z_{i}^\top u - \frac{\omega_{i}}{2}\Big[ (z_{i}^\top u)^{2} + 2 z_{i}^\top u x_{i}^\top \beta\Big] \Big\} \exp\Big[ -\frac{1}{2}u^\top D(\tau)u \Big]\\
&= \exp \Big[  - \frac{1}{2} u^\top (Z^\top \Omega(\omega) Z + D(\tau)) u + u^\top (Z^\top \kappa - Z^\top \Omega(\omega) X \beta)\Big].
\end{align*}
Thus, it follows that
\begin{align}
\label{eq:pu}
u \mid \beta, \omega, \tau, y \sim  N((Z^\top \Omega(\omega) Z + D(\tau))^{-1}(Z^\top \kappa - Z^\top \Omega(\omega) X \beta), (Z^\top \Omega(\omega) Z + D(\tau))^{-1}).
\end{align}
Also from \eqref{eq:jointp3}, the conditional density of $\omega$ and $\tau$ given $\eta$ and $y$ is as follows
\begin{align}
\label{eq:pomegatau}
\pi(\omega,\tau \mid \eta,y) & \propto \prod_{i=1}^{n} \exp(-\omega_{i}(m_{i}^\top\eta)^{2}/2) p(\omega_{i}) \left|D(\tau)\right|^{\frac{1}{2}} \exp(-u^\top D(\tau)u/2) \nonumber\\ & \quad \times \prod_{j=1}^{r} \tau_{j}^{a_{j}-1}\exp(-b_{j}\tau_{j}),
\end{align}
where $\left|D(\tau)\right|$ is the determinant of $D(\tau)$. From the above, we see that $\omega_{i}$'s, $i=1,...,n$ are conditionally independent given $(\eta,y)$. Also, $(\omega, \tau)$ are conditionally mutually independent given $(\eta, y)$. The conditional density for $\omega_{i}$ is
\begin{align}
\label{eq:pomegai}
\pi(\omega_{i} \mid \eta,y)
& \propto \exp(-\omega_{i}(m_{i}^\top \eta)^{2}/2) p(\omega_{i}) \nonumber\\ &  
=\sum_{\ell=0}^{\infty}(-1)^{\ell} \frac{(2\ell+1)}{\sqrt{2\pi \omega_{i}^{3}}} \exp\Big(-\frac{(2\ell+1)^{2}}{8\omega_{i}} - \frac{\omega_{i}(m_{i}^\top\eta)^{2}}{2}\Big), 
\end{align}
where the equality follows from \eqref{eq:pg1}.
From \cite{wang2018geometric}, the pdf for PG$(a,b), a>0, \,b \in \mathbb{R}$ is 
\begin{align*}
p(x \mid a,b) = \Big[\cosh \big(\frac{b}{2}\big)\Big]^a\frac{2^{a - 1}}{\Gamma{(a)}} \sum_{\ell=0}^{\infty}(-1)^\ell \frac{\Gamma{(\ell+a)}}{\Gamma{(\ell+1)}}\frac{(2\ell+a)}{\sqrt{2\pi x^{3}}} \exp\Big(-\frac{(2\ell+a)^{2}}{8x} - \frac{xb^{2}}{2}\Big), 
\end{align*}
for $x>0,$ where the hyperbolic cosine function $\cosh(t) = (e^{t} + e^{-t})/2$. Hence, from \eqref{eq:pomegai} we have
\begin{align}
\label{eq:distributionomega}
\omega_{i} \mid \eta,y \overset{ind}\sim PG \big(1,\abs{m_{i}^\top \eta} \big), \,i = 1,....n.
\end{align}
From \eqref{eq:pomegatau}, the conditional density for $\tau_{j}$ is given by
\begin{align}
\label{eq:pdensitytau}
&\pi(\tau_{j} \mid \eta,y) \propto \tau_{j}^{q_{j}/2+a_{j}-1}\exp\big{[}-\tau_{j}(b_{j} + u_{j}^\top u_{j}/2)\big{]},\,j = 1,...,r.
\end{align}
Thus, we have
\begin{align*}
 \tau_{j} \mid \eta,y \overset{ind}\sim  \text{Gamma}(a_{j} + q_{j}/2,b_{j} + u_{j}^\top u_{j}/2),\,j = 1,...,r. 
 \end{align*}
We will allow $a_{j} > -q_{j} /2$, and $b_{j} \ge 0$. The case where $b_{j} = 0$ is more complicated and will be discussed at length in Section \ref{sec:ge}.

Let $(\beta^{(m)}, u^{m}, \omega^{(m)}, \tau^{(m)})$ denote the $m^{th}$ element for $(\beta, u, \omega, \tau)$ in the FG chain. Thus, a single iteration of the full Gibbs sampler $\{\beta^{(m)}, u^{m}, \omega^{(m)},\tau^{(m)}\}_{m=0}^{\infty}$ consists of the following four steps:
\begin{center}
\begin{table*}[h]
\begin{tabular}{l}
\hline
$\boldsymbol{Algorithm \, 1}$ \; The (m+1)st iteration of the full Gibbs sampler  \\  \hline
1. Draw $\tau_{j}^{(m+1)} \overset{ind}\sim$ Gamma$(a_{j} + q_{j}/2, b_{j} + u_{j}^\top u_{j}/2), \,j = 1,...,r$ with $u = u^{(m)}$. \\
2. Draw $\omega_{i}^{(m+1)} \overset{ind}{\sim} PG \big(1,\abs{m_{i}^\top\eta^{(m)}} \big),\, i = 1,....n$.\\
3. Draw $u^{(m+1)} \sim \eqref{eq:pu} 
  $ with $\tau = \tau^{(m+1)},\,\omega = \omega^{(m+1)}$, and $\beta = \beta^{(m)}$. \\
4. Draw $\beta^{(m+1)} \sim \eqref{eq:pbeta} 
  $ with $\omega = \omega^{(m+1)}$.\\
\hline
\end{tabular}
\end{table*}
\end{center}

\subsection{A two-block Gibbs sampler}
\label{sec:blockgibbs}
In this section, we construct a block Gibbs (BG) sampler for \eqref{eq:jointp1}. As mentioned before, $M = [X Z]$ with the $i^{th}$ row $m_{i}^\top$ for $i=1,\dots,n$. Note that $x_{i}^\top\beta + z_{i}^\top u = m_{i}^\top \eta$. From \eqref{eq:jointp3}, the conditional density of $\eta$ given $\omega,\tau,y$ is given by
\begin{align}
\label{eq:peta}
\pi(\eta \mid \omega,\tau,y) &\propto \prod_{i=1}^{n} \exp\big{[}k_{i}m_{i}^\top \eta-\omega_{i}(m_{i}^\top \eta)^{2}/2\big{]}
\exp\Big[-\frac{1}{2}u^\top D(\tau) u\Big]\nonumber\\ & \quad \times\exp\Big[-\frac{1}{2}(\beta-\mu_{0})^\top Q( \beta-\mu_{0})\Big] \nonumber\\
&\propto \exp\Big[-\frac{1}{2}(\eta - \Sigma (M^\top \kappa+ b))^\top \Sigma^{-1}(\eta - \Sigma (M^\top \kappa+b))\Big], 
\end{align}
where $\Sigma^{-1} = M^\top \Omega(\omega) M + A(\tau)$,
\[
  b_{(p+q)\times1} = \begin{pmatrix} Q\mu_{0} \\ 0_{q \times 1} \end{pmatrix}\; \mbox{and} \; A(\tau)_{(p+q)(p+q)} = \begin{pmatrix}   
                        Q  & 0 \\
                         0  & D(\tau)
                       \end{pmatrix}.
                     \]
                     Hence,
\begin{align}
\label{eq:distrieta}
\eta \mid \omega,\tau,y \sim N((M^\top \Omega(\omega) M + A(\tau))^{-1} (M^\top\kappa+b),(M^\top \Omega(\omega) M + A(\tau))^{-1}).
\end{align}
In the FG sampler in Section \ref{sec:FG}, $\tau, \, \omega, \, u$ and $\beta$ are drawn sequentially, whereas, in this section, we show that the conditional distribution of $\eta$ given $\omega,\,\tau,\,y$ is normal. From \eqref{eq:pomegatau}, we can see, conditional on $(\eta,y)$, $\omega$ and $\tau$ are independent. Thus, $\tau$ and $\omega$ can be drawn jointly as a block and we have a two-block Gibbs sampler.

Let $\eta^{(m)}, \omega^{(m)},$ and $\tau^{(m)}$ denote the values of $\eta, \omega,$ and $\tau$, respectively, in the $m^{th}$ iteration of the BG sampler. A single iteration of the block Gibbs sampler $\{\eta^{(m)}, \omega^{(m)},\tau^{(m)}\}_{m=0}^{\infty}$ consists of the following two steps:
\begin{center}
\begin{table*}[h]
\begin{tabular}{l}
\hline
$\boldsymbol{Algorithm \, 2}$ \; The (m+1)st iteration of the two-block Gibbs sampler  \\  \hline
1. Draw $\tau_{j}^{(m+1)} \overset{ind}\sim$ Gamma$(a_{j} + q_{j}/2, b_{j} + u_{j}^\top u_{j}/2), \,j = 1,...,r$ with $u = u^{(m)}$, \\
and independently draw $\omega_{i}^{(m+1)} \overset{ind}{\sim} PG \big(1,\abs{m_{i}^\top\eta^{(m)}}\big),\, i = 1,....n$.\\
2. Draw $\eta^{(m+1)} \sim \eqref{eq:distrieta} 
  $ with $\tau = \tau^{(m+1)}$ and $\omega = \omega^{(m+1)}$.\\
\hline
\end{tabular}
\end{table*}
\end{center}
The conditional distribution of $\eta$ in the BG sampler and the
conditional distributions of $\beta$ and $u$ in the FG sampler are all
normal distributions of the form $N(S^{-1} t, S^{-1})$ for some
matrix $S$ and a vector $t$. Note that, for the conditional
distribution of $\eta$, $S$ is a $(p+q) \times (p+q)$ matrix, whereas
for $\beta$ and $u$ this is a $p \times p$ and $q \times q$ matrix,
respectively. Thus, a naive method of drawing from
$N(S^{-1} t, S^{-1})$ is inefficient especially if $p$ and/or $q$ is
large as it involves calculating inverse of the matrix $S$. Here, we
use a known method of drawing from $N(S^{-1} t, S^{-1})$ that does not
require computing $S^{-1}$. The method is as follows:
\begin{center}
\begin{table*}[h]
\begin{tabular}{l}
\hline
Algorithm for drawing from $N(S^{-1} t, S^{-1})$   \\  \hline
1. Let $S = LL^\top$ be the Cholesky decomposition of $S$.\\
2. Solve $L w = t$.\\
3. Draw  $z \sim N(0, \I_{k})$ where $k$ is the dimension of $S$.\\
4. Solve $L^\top x = w + z$. Then $x \sim N(S^{-1} t, S^{-1})$. \\
\hline
\end{tabular}
\end{table*}
\end{center}

\section{A real data example}
\label{sec:numerical}
We consider the student performance data set from
\cite{cort:silv:2008}. This data set includes $n=649$ observations and
$33$ variables including several categorical variables. As in
\cite{cort:silv:2008}, the binary response is defined as $1$ if the
final grade is greater than or equal to $10$, otherwise, it is defined
as $0$. Recall that $p$ denotes the number of columns for the design
matrix $X$. Also, note that, categorical variables are incorporated
into the LLMM as sets of dichotomous variables through what is known
as dummy coding. We consider different subsets of variables while
fitting the LLMM to compare the BG and FG samplers for different
dimensions. In particular, we consider $p=3, 7, 23$, including an
intercept term. We also keep one random effect `school' with $2$
levels in the LLMM. As mentioned in \cite{chib:rama:2010}, a key
principle to block sampling in MCMC is that `parameters in different
blocks are not strongly correlated whereas those within a block are'
\citep[see also][]{robe:sahu:1997, ture:deva:paci:2017}. As we observe
later in this Section, when $p$ varies, the average absolute (posterior)
correlations between $\beta$ and $u$ change greatly. The specific
values of $p$ we choose here are irrelevant for the general
conclusions of this section. Indeed, for some other $p$ values, close
to the ones we choose here, the general pattern of the different
empirical measures remains the same.

We analyze the data set by fitting the LLMM with a proper normal prior
\eqref{eq:betaprior} on $\beta$ with $\mu_{0} = 0 $ and
$Q =0.001 \I_{3} $. Also, we use a proper Gamma prior
\eqref{eq:tauprior} on $\tau_{1}$ where the (prior) mean and variance
of $\tau_1$ are $1.2$ and $100$, respectively ($a_{1} =0.0144 $ and
$b_{1} = 0.012$). We ran the BG sampler for $m=120,000$ iterations,
starting at an initial value $\eta^{(0)} = (\beta^{(0)}, u^{(0)})$
with a burn-in of $B=20,000$ iterations. Here, $\beta^{(0)}$ is the
estimate of $\beta$ obtained by fitting a logistic linear model
without any random effect. For $p=3, 7$, the initial value $u^{(0)}$
is a sample drawn from $N(0,(1/\tau_{1}^{(0)})\I_{2})$, where
$1/\tau_{1}^{(0)}$ is the estimate of the random effect variance
component obtained from the R package lme4. For $p=23$,
$1/\tau_{1}^{(0)}$ is the estimate of the random effect variance
component obtained from $p=7$ as lme4 did not run successfully in the
case of $p=23$. The FG sampler was also run for $m=120,000$ iterations
with a burn-in of $B=20,000$ iterations.

The BG and FG samplers are compared using the lag $k$ autocorrelation
function (ACF) values $k = 1,...,5$, the effective sample size (ESS) and
the multivariate ESS (mESS) (See \cite{roy:2020} for a simple introduction
to these convergence diagnostic measures.).  The ESS and mESS are
calculated using the R package mcmcse. We also compute the mean
squared jumps (MSJ) defined as
$\sum_{i=B+1}^{m} \norm{\beta^{(i+1)}-\beta^{(i)}}^2/(m-B)$ for the
$\beta$ variable, and similarly for the other variables. Here, $\norm{\cdot}$ denotes the Euclidean norm.
Tables~\ref{tab:realacf3}--\ref{tab:realacf23} provide the values of
the ACF for $\beta_0, \beta_1, \beta_2$ and $\tau_1$ for the BG and FG
samplers, as $p$ varies. Better performance of the BG sampler compared
to the FG sampler is observed from its mostly smaller ACF
values. Table~\ref{tab:realmess_3_7_23} provides the ESS values of
the intercept parameter, first two regression coefficients and
$\tau_1$. It also gives the mESS values for $u$ and $(\beta, \tau_1)$,
as $p$ varies. Again, better efficiency of the BG sampler compared to
the FG sampler is observed from its generally larger ESS and mESS values. We
calculate the average of the absolute (posterior) correlations between
the coordinates of the $\beta$ vector and those of the $u$ vector
computed based on the BG samples mentioned before. For $p=3, 7,$ and 23, these
values are 0.2601, 0.1143, and 0.0206, respectively. From the mESS
values of the parameters of interest, namely $(\beta, \tau_1)$, given
in Table~\ref{tab:realmess_3_7_23}, we see that the increase in
the efficiency of the BG sampler compared to the FG sampler is higher
for smaller $p$ values.  From Table~\ref{tab:realmsj}, it can be seen
that the BG sampler leads to higher MSJ values than the FG sampler
with the exception of $\tau_{1}$ in some cases. Thus,
Table~\ref{tab:realmsj} also corroborates better mixing of the BG
sampler than the FG sampler. So, in practice, the BG sampler can provide significant
gains compared to the FG sampler.

Finally, Table~\ref{tab:realmess_3_7_23} also provides the time
normalized efficiency (ESS and mESS values per second) for the two
samplers, as $p$ varies. From this Table, we see that the BG sampler
has always resulted in larger ESS and mESS values per second than the
FG sampler.  Indeed, the BG sampler results in higher time normalized
ESS values even in the cases when the FG sampler performs better in
terms of the ESS (See e.g. ESS ($\beta_1$) for $p=7$.). Recall that in
every iteration, the BG sampler makes a draw from a $(p+q)$
dimensional normal distribution, whereas the FG sampler draws from a
$q$ dimensional normal distribution and then a $p$ dimensional normal distribution. Other
draws are the same for both the BG and FG samplers. Using the Cholesky
decomposition method mentioned in Section \ref{sec:blockgibbs}, we
observe that for all values of $p$ considered here, the BG sampler
takes less time than the FG sampler to complete a certain number of
iterations. On the other hand, when $(p+q)$ takes much larger values,
the BG sampler takes more running time than the FG sampler.

\begin{center}
\begin{table}[h]
  \caption{ACF for the BG and FG samplers for the student performance data with $p=3$}
  \centering
\begin{tabular}{ccccccc}
\hline\hline Parameter&  Sampler  & lag 1& lag 2& lag 3& lag 4& lag 5\\
 \hline
$\beta_{0}$&BG&0.434&0.385&0.359&0.333&0.319\\
           &FG&0.985&0.974&0.964&0.956&0.948\\
\hline
$\beta_{1}$&BG&0.597&0.380&0.258&0.192&0.153\\
           &FG&0.613&0.402&0.282&0.212&0.173\\
           \hline
$\beta_{2}$&BG&0.836&0.734&0.667&0.621&0.586\\
           &FG&0.838&0.739&0.671&0.623&0.587\\
           \hline
$\tau_{1}$&BG&0.374&0.212&0.134&0.091&0.071\\
          &FG&0.405&0.285&0.224&0.186&0.155\\           
           \hline
\end{tabular}
\label{tab:realacf3}
\end{table}
\end{center}

\begin{center}
\begin{table}[h]
  \caption{ACF for the BG and FG samplers for the student performance data with $p=7$}
  \centering
\begin{tabular}{ccccccc}
\hline\hline Parameter&  Sampler  & lag 1& lag 2& lag 3& lag 4& lag 5\\
\hline
$\beta_{0}$&BG&0.463&0.409&0.365&0.340&0.320\\&FG&0.924&0.867&0.821&0.781&0.747\\
\hline
$\beta_{1}$&BG&0.436&0.191&0.087&0.036&0.018\\         &FG&0.437&0.193&0.088&0.038&0.017\\
           \hline
$\beta_{2}$&BG&0.419&0.185&0.085&0.047&0.031\\
           &FG&0.420&0.187&0.090&0.053&0.032\\
           \hline
$\tau_{1}$&BG&0.379&0.216&0.142&0.098&0.063\\
          &FG&0.372&0.244&0.190&0.150&0.122\\           
           \hline
\end{tabular}
\label{tab:realacf7}
\end{table}
\end{center}

\begin{center}
\begin{table}[h]
  \caption{ACF for the BG and FG samplers for the student performance data with $p=23$}
  \centering
\begin{tabular}{ccccccc}
\hline\hline Parameter&  Sampler  & lag 1& lag 2& lag 3& lag 4& lag 5\\
 \hline
$\beta_{0}$&BG&0.870 &0.811&0.761&0.714&0.669\\
           &FG&0.933&0.874&0.820&0.770&0.723\\
\hline
$\beta_{1}$&BG&0.637&0.437&0.323&0.256&0.213 \\      &FG&0.655&0.463&0.348&0.278&0.235\\
           \hline
$\beta_{2}$&BG&0.885&0.807&0.753&0.713&0.681\\
           &FG&0.880&0.801&0.744&0.702&0.669\\
           \hline
$\tau_{1}$&BG&0.384&0.228&0.147&0.098&0.070\\
          &FG&0.395&0.263&0.198&0.164&0.141\\           
           \hline
\end{tabular}
\label{tab:realacf23}
\end{table}
\end{center}

\begin{center}
\begin{table}[h]
  \caption{Multivariate ESS or ESS for the BG and FG samplers for the student performance data with $p=3, 7,$ and $23$. The numbers inside the parentheses are the corresponding values per second. The column name MC stands for Markov chain.}
 \centering
\begin{tabular}{ccccccccc}
\hline\hline $p$&  MC  & mESS ($\beta\; \tau_1$)   &mESS ($\beta$) & ESS ($\beta_{0}$) &ESS ($\beta_{1}$) &ESS ($\beta_{2}$)&mESS($u$) &ESS ($\tau_{1}$) \\
 \hline
\multirow{4}{*}{$3$}&\multirow{2}{*}{BG}&19,012 &15,979 &4,229 &9,268 &2,586 &34,623 &31,688 \\
&&(321) &(270)&(71)&(156)&(44)&(585)&(535)\\
           &\multirow{2}{*}{FG}&1,539 &978 &40 &8,089 &2,560 &76 &2,037 \\
 &&(22)&(14)&(1)&(115)&(36)&(1)&(29)\\
\hline
\multirow{4}{*}{$7$}&\multirow{2}{*}{BG}&27,474 &26,702 &4,494 &36,015 &31,294 &34,844 &32,334 \\
&&(426)&(414)&(70)&(559)& (486)&(541)& (502)\\ &\multirow{2}{*}{FG}&13,533 &12,793 &1,894 &38,572 &29,826 &1,931 &11,401 \\
&&(179)&(169)&(25)&(509)&(394)&(25)&(150)\\
           \hline
\multirow{4}{*}{$23$}&\multirow{2}{*}{BG}&23,068 &22,966 &3,031 &7,272 &2,130 &31,529 &28,561 \\
&&(295)&(294)&(39)&(93)&(27)&(404)&(366)\\
           &\multirow{2}{*}{FG}&18,016 &17,770 &3,040 &5,912 &1,554 &1,022 &7,911 \\
           &&(187)&(184)&(32)&(61)&(16)&(11)&
(82)\\           \hline
\end{tabular}
\label{tab:realmess_3_7_23}
\end{table}
\end{center}
\begin{center}
\begin{table}[h]
  \caption{Mean squared jumps for the BG and FG samplers for the student performance data with $p=3,7,$ and $23$}
  \centering
\begin{tabular}{cccc|ccc}
\hline\hline p& \multicolumn{3}{c}{BG}& \multicolumn{3}{c}{FG}\\
\hline
 &$\beta$&$u$&$\tau$& $\beta$&$u$&$\tau$\\
 \hline
3&12.94&24.41&1319.13&0.73&0.10&1071.40\\
\hline
7&15.52&24.97&1192.74&3.03&0.10&1309.20 \\
 \hline
 23&90.19&29.68 &1240.48&74.84&0.11&1249.00\\
 \hline
\end{tabular}
\label{tab:realmsj}
\end{table}
\end{center}

\section{Geometric ergodicity of the block Gibbs sampler}
\label{sec:ge} 
We begin this section with a discussion on the conditional density
$\pi(\tau \mid \eta,y)$. Since we allow the prior rate
parameter $b_{j}$ for $\tau_{j}$ to be zero, define
$A = \{j \in \{1,2,...,r\}: b_{j} = 0\}$. Recall from
\eqref{eq:pdensitytau} that
$\tau_{j} \mid \eta,y
\overset{ind}{\sim}$ Gamma$(a_{j} + q_{j}/2,b_{j} + u_{j}^\top
u_{j}/2),\,j = 1,...,r$ when $a_{j} + q_{j}/2 >0$ and
$b_{j} + u_{j}^\top u_{j}/2 >0$. The density
$\pi(\tau \mid \eta,y)=\prod_{j=1}^{r}\pi(\tau_{j} \mid
\eta,y)$ is not defined when $A$ is not empty and
$\| u_{j} \| = 0$ for $j \in A$. Let
$N = \{\eta \in \mathbb{R}^{p+q}, \prod_{j \in A} \| u_{j} \| =
0\}$. The fact that $\pi(\tau \mid \eta,y)$ is not
defined on $N$ is irrelevant for simulating the BG sampler as $N$ is a
measure zero set with respect to the Lebesgue measure on
$\mathbb{R}^{p+q}$. But, for a theoretical analysis of the BG chain,
$\pi(\tau \mid \eta,y)$ needs to be defined for all
$\eta \in \mathbb{R}^{p+q}$. 
For all $\eta \in \mathbb{R}^{p+q}$,
we define
\begin{equation}
\label{eq:postau}
\pi(\tau \mid \eta, y) = \begin{cases} \prod_{j=1}^{r} f_{G} (\tau_{j}, a_{j}+\frac{q_{j}}{2}, b_{j} + \frac{1}{2}u_{j}^\top u_{j}) & \text{if} \quad \eta \not\in N\\ \prod_{j=1}^{r} f_{G} (\tau_{j}, 1, 1) & \text{if}\quad \eta \in N \end{cases} .
\end{equation} 
Here, $f_{G}(x, a, b)$ denotes the pdf of a gamma random variable with the shape parameter $a$, the rate parameter $b$, and evaluated at $x$. Thus, $f_{G}(x, a, b) = (b^a/\Gamma(a)) x^{a-1} \exp(-bx)$. The Markov transition density (Mtd) of the BG chain $\{\eta^{(m)}, \omega^{(m)},\tau^{(m)}\}_{m=0}^{\infty}$ is  
\begin{align}
\label{eq:mtd}
k(\eta,\omega,\tau \mid \eta',\omega',\tau') & =  \pi(\eta \mid \omega,\tau,y) \pi(\omega,\tau \mid \eta',y)\nonumber\\ & =  \pi(\eta \mid \omega,\tau,y) \bigg[\prod_{i=1}^n \pi(\omega_i \mid \eta',y)\bigg] \pi(\tau \mid \eta',y),
\end{align}
where the conditional densities $\pi(\eta \mid \omega,\tau,y), \pi(\omega_i \mid \eta',y),$ and $\pi(\tau \mid \eta',y)$ on the right side of
\eqref{eq:mtd} are given in \eqref{eq:peta}, \eqref{eq:pomegai}, and
\eqref{eq:postau}, respectively. It is easy to see that the joint density
\eqref{eq:jointp3} is the invariant density of $k$, and $k$ is
$\varphi$-irreducible. Thus, if \eqref{eq:jointp3} is a proper density,
that is, if $c(y) < \infty$ in \eqref{eq:jointp1}, then the BG
chain
$\{\eta^{(m)},
\omega^{(m)},\tau^{(m)}\}_{m=0}^{\infty}$ is Harris ergodic
\citep[][Chap 10]{meyn1993markov}, and hence, it
can be used to consistently estimate means with respect to \eqref{eq:jointp3}. Let
$S = \mathbb{R}^{p+q} \times \mathbb{R}_{+}^{n} \times
\mathbb{R}_{+}^{r}$. In fact, if $g:S \rightarrow \mathbb{R}$ is
integrable with respect to \eqref{eq:jointp3}, that is, if
\begin{align*}
\E_{\pi} |g(\eta,\omega,\tau)| := \int_{S}
\abs{g(\eta,\omega,\tau)}
\pi(\beta,u,\omega,\tau \mid y)d\eta \,d
\omega\, d\tau < \infty,
\end{align*}
then $\overline{g}_m := \sum_{i=0}^{m-1}
g(\eta^{(i)},\omega^{(i)},\tau^{(i)})/m \rightarrow
\E_{\pi}\,g$ almost surely as $m \rightarrow \infty$. On the other
hand, even when $\E_{\pi}\,g^2 < \infty$, Harris ergodicity of $k$ does
not guarantee a CLT holds for $\overline{g}_m $, which is used to obtain valid
standard errors of $\overline{g}_m $. We say a CLT for
$\overline{g}_m $ exists if
$\sqrt{m}( \overline{g}_m - \E_{\pi}\,g) \stackrel{d}{\rightarrow}
N(0,\sigma_{g}^2) \; as \;m \rightarrow \infty$ for some
$\sigma^2_{g} \in (0,\infty)$. Certain convergence rates of the BG
chain, as we explain next, ensure a CLT holds for $\overline{g}_m $.

Let $\B(S)$ denotes the Borel $\sigma$-algebra of $S$. Let $K^{(m)}: S \times \B(S) \rightarrow [0,1]$ denotes the $m$-step Markov transition function (Mtf) corresponding to the Mtd \eqref{eq:mtd}, that is, 
\begin{align*}
K^{(m)}((\eta',\omega',\tau'), B) = P ((\eta^{(m+j)},\omega^{(m+j)},\tau^{(m+j)}) \in B  \,|\, (\eta^{(j)},\omega^{(j)},\tau^{(j)})=(\eta',\omega',\tau')), 
\end{align*}
for any $j \in \{1, 2, \dots\}$ and for any measurable set $B \in \B(S)$. The BG chain is geometrically ergodic if there exist a function $H:S \rightarrow [0,\infty)$ and a constant $\rho \in (0,1)$ such that for all $m=0,1,2,...$,
\begin{align}
\label{eq:ge}
 \| K^m((\eta',\omega',\tau'),\cdot)-\Pi(\cdot)\|_{\TV}:=\sup_{B \in \B(S)} \| K^m((\eta',\omega',\tau'),B)-\Pi(B) \| \le H(\eta',\omega',\tau')\rho^m,
\end{align}
where $\Pi(\cdot)$ denotes the probability measure corresponding to the joint posterior density \eqref{eq:jointp3}, and $\| \cdot\|_{\TV}$ denotes the total variation norm. Harris ergodicity of $k$ implies the TV norm in \eqref{eq:ge} $\downarrow 0$ as $m \rightarrow \infty$, but does not ascertain any rate at which this convergence takes place. On the other hand, \eqref{eq:ge} guarantees a CLT for $\overline{g}_m $ if $\E_{\pi}\, |g|^{2+\delta} < \infty$ for some $\delta >0$ \citep{roberts2004general}. If \eqref{eq:ge} holds, it also implies that consistent batch means and a spectral variance estimator $\hat{\sigma}^2_{g}$ of $\sigma^2_{g}$ are available (\cite{vats2018strong}, \cite{vats2019multivariate}), and thus a valid standard error (SE) $\hat{\sigma}_{g}/\sqrt{m}$ for $\overline{g}_m $ can be calculated. An advantage of being able to calculate a valid SE is that it can be used to decide `when to stop' running the BG chain \citep{roy:2020}.

An important property that we are going to use in this article is that the marginal sequences  $\{\eta^{(m)}\}_{m=0}^{\infty}$, $\{(\omega^{(m)},\tau^{(m)})\}_{m=0}^{\infty}$ of the BG chain $\{\eta^{(m)},\omega^{(m)},\tau^{(m)}\}_{m=0}^{\infty}$ are themselves Markov chains, and either all three chains are geometrically ergodic or none of them \citep{liu1994covariance,robe:rose:2001}. Thus, we are free to analyze any of these chains to study their geometric convergence properties. Indeed, here we analyze the $\{\eta^{(m)}\}_{m=0}^{\infty}$ marginal chain.

We denote the Markov chain $\{\eta^{(m)}\}_{m=0}^{\infty}$ on $\mathbb{R}^{p+q}$ by $\Psi$ while the Markov chain $\{\eta^{(m)}\}_{m=0}^{\infty}$ on $\mathbb{R}^{p+q}\backslash N$ is denoted by $\tilde{\Psi}$. From \eqref{eq:mtd} it follows that the Mtd of the $\Psi$ chain is
\begin{align}
\label{eq:mtd_eta}
\tilde{k}(\eta \mid \eta') = \int_{\mathbb{R}_{+}^{r}} \int_{\mathbb{R}_{+}^{n}} \pi(\eta \mid \omega,\tau,y) \pi(\omega,\tau \mid \eta',y)d \omega \,d\, \tau,
\end{align}
\indent We can verify that $\tilde{k}(\eta \mid \eta')\pi(\eta' \mid y) = \tilde{k}(\eta' \mid \eta)\pi(\eta \mid y)$ for all $\eta, \eta' \in \mathbb{R}^{p+q}$ where $\pi(\eta \mid y) = \int_{\mathbb{R}_{+}^{r}} \int_{\mathbb{R}_{+}^{n}} \pi(\eta, \omega, \tau \mid y) d \omega\, d\, \tau$ is the $\eta$ marginal density of \eqref{eq:jointp3}. Hence, \eqref{eq:mtd_eta} is reversible with respect to  $\pi(\eta \mid y)$, and thus, $\pi(\eta \mid y)$ is the invariant density for the Markov chain $\{\eta^{(m)}\}_{m=0}^{\infty}$. Also, since $\{\eta^{(m)}\}_{m=0}^{\infty}$ is reversible, GE of the chain implies that CLTs hold for all square integrable functions with respect to $\pi(\eta \mid y)$ \citep{roberts1997geometric}. We first establish GE of the $\tilde{\Psi}$ chain. As explained in the proof of Theorem $1$, GE of $\tilde{\Psi}$ implies that of $\Psi$.  

\begin{theorem}
\label{theoremimproper}
If $\pi(\beta) \propto 1$, that is, if $Q = 0$ in \eqref{eq:betaprior}, the Markov chain underlying the block Gibbs sampler is geometrically ergodic if the following conditions hold:
\begin{enumerate}
\item $a_{j} < b_{j} = 0$ or $b_{j} > 0$ for $j = 1,...,r$;
\item    $a_{j} + q_{j}/2 > 0$ for $j = 1,...,r$;
\item    $M$ has full rank;
\item    There exists a positive vector $e>0$ such that $e^{\prime}M^{*} = 0$ where $M^{*}$ is an $n \times (p+q)$ matrix with $i$th row $c_{i}m_{i}^{\top}$, where $c_{i} =1-2 y_{i}$, $i=1,\dots,n$.
\end{enumerate}
    
\end{theorem}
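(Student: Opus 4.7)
The plan is to establish a geometric drift condition together with a minorization condition on its sublevel sets, from which geometric ergodicity follows via Theorem 15.0.1 of \cite{meyn1993markov}. Because the marginal chains of a two-block Gibbs sampler inherit each other's convergence rate \citep{liu1994covariance,robe:rose:2001}, and because $N$ is a $\pi(\eta \mid y)$-null set, it suffices to analyze the marginal chain $\tilde{\Psi} = \{\eta^{(m)}\}_{m=0}^{\infty}$ on $\mathbb{R}^{p+q} \setminus N$. The transfer of GE from $\tilde{\Psi}$ to $\Psi$ and hence to the full joint BG chain is routine and is precisely why the authors flag the measure-zero set $N$ before the statement.

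For the drift, I would propose a Lyapunov function that combines the logistic log-likelihood structure with quadratic penalties on the random-effects blocks, namely
\[
V(\eta) \;=\; 1 + \sum_{i=1}^n e_i \log\bigl(1 + \exp(c_i m_i^\top \eta)\bigr) + \sum_{j=1}^r \lambda_j\, u_j^\top u_j,
\]
where $c_i = 1 - 2y_i$ and the $e_i > 0$ are the components of the positive vector guaranteed by condition 4, while the $\lambda_j > 0$ are tuning constants chosen later. Condition 4 (existence of $e > 0$ with $e^\top M^* = 0$) together with the full-rank condition 3 makes $V$ coercive on $\mathbb{R}^{p+q}$, so every sublevel set $C_d = \{\eta : V(\eta) \le d\}$ is compact. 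This Lyapunov function is motivated by Chen--Shao--type propriety arguments for logistic regression and by the analogous drift functions used in \cite{wang2018convergence} for probit mixed models.

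To verify $(P V)(\eta') \le \rho V(\eta') + L$ with $\rho \in (0,1)$, I would apply the tower property, integrating first over the Gaussian conditional \eqref{eq:distrieta} for $\eta \mid (\omega,\tau)$ and then over $\pi(\omega,\tau \mid \eta', y)$. For the logarithmic terms in $V$, the elementary bound $\log(1 + e^t) \le t^+ + \log 2$ together with a Gaussian moment estimate on $c_i m_i^\top \eta$ reduces $(PV)(\eta')$ to an expression involving $\E[\omega_i \mid \eta', y]$, which is uniformly bounded in $\eta'$ because the PG$(1,\cdot)$ mean is bounded; the essential cancellation that removes the otherwise divergent linear term in $\eta'$ is supplied by condition 4 in the form $\sum_i e_i c_i m_i = 0$. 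For the quadratic terms $\lambda_j u_j^\top u_j$, condition 2 makes the shape parameter of the Gamma law \eqref{eq:postau} positive, condition 1 controls its reciprocal moments, and combining the Gamma mean $\E[\tau_j^{-1} \mid \eta', y]$ (which is linear in $b_j + \|u_j'\|^2/2$) with the $(M^\top \Omega M + A(\tau))^{-1}$ covariance in \eqref{eq:distrieta} produces a strict contraction in $\|u_j'\|^2$ once $\lambda_j$ is small enough.

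The main obstacle will be the calibration step: the weights $e_i$ and $\lambda_j$, as well as the constants produced by the nested Gaussian, PG, and Gamma expectations, must be balanced so that the combined coefficient of $V(\eta')$ in the upper bound is genuinely less than one while $L < \infty$. Once the drift condition is in place, a minorization condition on each compact $C_d$ follows routinely from the continuity and strict positivity of $\tilde{k}(\eta \mid \eta')$ in \eqref{eq:mtd_eta}, which is inherited from the positivity of the Gaussian, PG, and Gamma densities. Theorem 15.0.1 of \cite{meyn1993markov} then delivers geometric ergodicity of $\tilde{\Psi}$, completing the proof.
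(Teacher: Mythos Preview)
Your drift function has a genuine gap in the random-effects part. The quadratic term $\lambda_j\,u_j^\top u_j$ forces you to control $\E[\tau_j^{-1}\mid \eta',y]$, but under conditions 1--2 the Gamma shape $a_j+q_j/2$ is only required to be positive; whenever $a_j+q_j/2\le 1$ (e.g.\ $q_j=1$, $a_j=-0.4$, $b_j=0$) this reciprocal moment is infinite and your bound blows up. Even when $a_j+q_j/2>1$, the trace part of $\E[\|u_j\|^2\mid\omega,\tau,y]$ is at most $q_j/\tau_j$ only in the favorable direction, and after the outer Gamma expectation the coefficient of $\|u_j'\|^2$ is $\frac{q_j/2}{a_j+q_j/2-1}$, which exceeds $1$ whenever $a_j<1$; in particular it always exceeds $1$ in the improper case $b_j=0,\ a_j<0$. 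Scaling by $\lambda_j$ does not help, since it multiplies both sides. The paper's drift function instead uses the \emph{negative} power $(u_j^\top u_j)^{-c}$ with small $c\in(0,1/2)$: this requires only $\E[\tau_j^{c}\mid\eta',y]$, which is finite for every positive shape, and the resulting coefficient $2^{-c}\frac{\Gamma(q_j/2-c)}{\Gamma(q_j/2)}\cdot\frac{\Gamma(a_j+q_j/2+c)}{\Gamma(a_j+q_j/2)}2^{c}$ can be made strictly less than $1$ precisely because $a_j<0$ on the set $A$. This sign reversal is the key idea you are missing.

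A second, related problem is your minorization step. Because your $V$ stays bounded as $u_j\to 0$, the sublevel sets $C_d$ contain points arbitrarily close to $N$; but as $u_j'\to 0$ with $b_j=0$, the conditional law of $\tau_j$ escapes to $+\infty$ and $\tilde{k}(\eta\mid\eta')$ concentrates all its mass near $u_j=0$, so no uniform lower bound on $C_d$ survives. The paper avoids this by (i) having $V$ blow up near $N$, so that sublevel sets are compact in $\mathbb{R}^{p+q}\setminus N$, and (ii) invoking the Feller property (Lemma~15.2.8 of \cite{meyn1993markov}) rather than an explicit minorization. Finally, your treatment of the ``logistic'' part is also off: the relevant inner bounds involve $\omega_i^{-1}$, not $\omega_i$, so the uniform bound on the PG mean is not what is needed; the paper uses $\sum_i|m_i^\top\eta|$ and controls it through a Cauchy--Schwarz argument that converts condition~4 into a strict constant $\sqrt{\rho_1}<1$ in front of $\sum_i\omega_i^{-1}$, after which Lemma~\ref{le:eomega} closes the loop via $\E[\omega_i^{-1}\mid\eta']\le 2|m_i^\top\eta'|+L(1)$.
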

\indent The proof of Theorem \ref{theoremimproper} is given in the Appendix C. The condition $4$ can be easily checked by an optimization method presented in \cite{roy2007convergence}.
\begin{remark}
The conditions in Theorem $1$ are the same as the conditions assumed in \pcite{wang2018convergence} Theorem $2$ that establishes GE of Gibbs samplers for the probit linear mixed model.
\end{remark}

\begin{remark}
  As mentioned before, \cite{wang2018analysis} analyzed the PG sampler
  for LLMMs with proper normal priors on $\beta$ and a truncated gamma
  prior on $\tau$. \pcite{wang2018analysis} proof established uniform
  ergodicity which is stronger than geometric ergodicity, but they
  assumed a stronger prior on $\tau$. Indeed, their proof involving a
  minorization condition requires that the support of $\tau$ is
  bounded away from zero. Our analysis of the BG Markov chain does not
  entail any minorization condition, and does not put any restriction, other than being positive,
  on the support of the variance components.
\end{remark}
\section{Conclusion}
\label{sec:discussion}
In this article, we consider an efficient block Gibbs sampler based on
the P\'{o}lya-Gamma DA \citep{polson2013bayesian} for one of the most
widely used statistical models, namely the LLMMs. Through numerical
examples, we observe that blocking can improve performance of the
P\'{o}lya-Gamma Gibbs samplers. We hope that the article will
encourage development and use of efficient blocking strategies for
Monte Carlo estimation of other GLMMs, including spatial GLMMs where
MCMC algorithms are known to suffer from slow mixing as noted in
\cite{evan:roy:2019}.

Undertaking
a Foster-Lyapunov drift analysis, we establish CLTs for the BG sampler
based MCMC estimators under the improper uniform prior on the regression
coefficients and improper or proper priors on the variance
components. These theoretical results are crucial for obtaining
standard errors for MCMC estimates of posterior means. In the process
of our proof for demonstrating CLTs for the BG sampler, we also
establish some general results on the P\'{o}lya-Gamma distribution. A
potential future problem is to construct and study block Gibbs
samplers for other GLMMs, including the mixed models with the robit
link \citep{roy2012convergence}.

\medskip
 \section*{Acknowledgment}
  The authors thank the editor and two anonymous reviewers for several helpful comments and suggestions that led to an improved revision of the paper.

\bigskip
\noindent {\Large \bf Appendices}

\begin{appendix}
\section{Some useful results}
Recall from Section \ref{sec:blockgibbs} that if $Q = 0$, that is, if $\pi(\beta)\propto 1$, then $b = 0$ and $A(\tau) =\begin{pmatrix}   
                        0  & 0 \\
                         0  & D(\tau)
                         \end{pmatrix}=  B(\tau),$ say.
In that case, \eqref{eq:distrieta} becomes
 \[\eta \mid \omega,\tau,y \sim N((M^\top \Omega(\omega) M + B(\tau))^{-1} M^\top\kappa,(M^\top \Omega(\omega) M + B(\tau))^{-1}).\]  By using the method of calculating the inverse of a partitioned matrix, the covariance matrix is
 \begin{align*}
 (M^\top \Omega(\omega) M + B(\tau))^{-1}&=\begin{pmatrix}
X^\top \Omega(\omega) X &X^\top \Omega(\omega) Z\\
Z^\top \Omega(\omega) X & Z^\top \Omega(\omega) Z + D(\tau)
\end{pmatrix}^{-1} \nonumber\\ & 
= \begin{pmatrix}
(\tilde{X}^\top\tilde{X})^{-1} + \tilde{R}\tilde{S}^{-1}\tilde{R}^\top &-\tilde{R}\tilde{S}^{-1}\\
-\tilde{S}^{-1}\tilde{R}^\top  &\tilde{S}^{-1}
\end{pmatrix},
 \end{align*}
 where $\tilde{X}=\Omega(\omega)^{\frac{1}{2}}X$, $\tilde{Z}=\Omega(\omega)^{\frac{1}{2}}Z$, $P_{\tilde{X}} = \tilde{X}(\tilde{X}^\top\tilde{X})^{-1}\tilde{X}^\top $, $\tilde{S} = \tilde{Z}^\top(I - P_{\tilde{X}})\tilde{Z} + D(\tau)$, and $\tilde{R}=(\tilde{X}^\top\tilde{X})^{-1}\tilde{X}^\top\tilde{Z}$. For the mean vector, it follows that
\begin{align}
\label{eq:meaneta}
(M^\top \Omega(\omega) M + B(\tau))^{-1} M^\top\kappa &=\begin{pmatrix}
(\tilde{X}^\top\tilde{X})^{-1} + \tilde{R}\tilde{S}^{-1}\tilde{R}^\top &-\tilde{R}\tilde{S}^{-1}\\
-\tilde{S}^{-1}\tilde{R}^\top  &\tilde{S}^{-1}
\end{pmatrix} \begin{pmatrix} X^\top\kappa \\ Z^\top \kappa  \end{pmatrix} \nonumber\\
&=\begin{pmatrix}
(\tilde{X}^\top\tilde{X})^{-1}X^\top\kappa + \tilde{R}\tilde{S}^{-1}\tilde{R}^\top X^\top\kappa-\tilde{R}\tilde{S}^{-1}Z^\top\kappa\\
-\tilde{S}^{-1}\tilde{R}^\top X^\top\kappa+\tilde{S}^{-1}Z^\top\kappa
 \end{pmatrix}.
\end{align} 
The first element in the right-hand side of \eqref{eq:meaneta} is the mean vector for $\beta$, while the second element in it is the mean vector for $u$. Thus,
\begin{align}
\label{eq:udistribution}
u \mid \omega,\tau,y \sim N(-\tilde{S}^{-1}\tilde{R}^\top X^\top\kappa+\tilde{S}^{-1}Z^\top\kappa,\tilde{S}^{-1}).
\end{align}
\begin{lemma}
\label{lemma:rineq}
Let $R_{j}$ be a $q_{j} \times q$ matrix consisting of $0$'s and $1$'s such that $R_{j}u = u_{j}$. Then 
\begin{align*}
(R_{j}\tilde{S}^{-1}{R_{j}}^\top)^{-1} \preceq \bigg(\sum_{i=1}^{n}\omega_{i} \tr(Z^\top Z) + \tau_{j} \bigg)\I_{q_{j}}.
\end{align*}
Here, for two matrices A and B, $A \preceq B$ means $B-A$ is a positive semidefinite matrix.  
\end{lemma}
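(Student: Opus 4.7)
The plan is to get the bound in two steps: first, reduce $(R_j \tilde S^{-1} R_j^\top)^{-1}$ to a Schur complement of $\tilde S$ and dominate it by a principal block; then, bound that block by a scalar multiple of $\I_{q_j}$.

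For the first step, I would partition $\tilde S$ conformably with the decomposition of $u$ into its $r$ random-effects blocks, viewing $\tilde S$ as a $2\times 2$ block matrix with $(j,j)$ block $\tilde S_{jj} := R_j \tilde S R_j^\top$, off-diagonal blocks $\tilde S_{j,-j}$, $\tilde S_{-j,j}$, and $(-j,-j)$ block $\tilde S_{-j,-j}$, where the ``$-j$'' label collects the remaining coordinates of $u$. Standard block-matrix inversion identifies the $(j,j)$ block of $\tilde S^{-1}$ with the inverse of the corresponding Schur complement, i.e.
\[
R_j \tilde S^{-1} R_j^\top \;=\; \bigl(\tilde S_{jj} - \tilde S_{j,-j}\tilde S_{-j,-j}^{-1}\tilde S_{-j,j}\bigr)^{-1}.
\]
Since $\tilde S_{-j,-j}\succ 0$ (it contains the positive block $\tau_{-j}\I$ on its diagonal), the subtracted term is positive semidefinite, so the Schur complement is $\preceq \tilde S_{jj}$. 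Hence $(R_j \tilde S^{-1} R_j^\top)^{-1} \preceq \tilde S_{jj}$.

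For the second step, I would upper bound $\tilde S_{jj}$ directly. Starting from $\tilde S = \tilde Z^\top(I-P_{\tilde X})\tilde Z + D(\tau)$ and using $I-P_{\tilde X}\preceq I$ (it is an orthogonal projection), pre- and post-multiplying by $R_j$ and $R_j^\top$ and using that $R_j D(\tau) R_j^\top = \tau_j \I_{q_j}$ gives $\tilde S_{jj} \preceq Z_j^\top \Omega(\omega) Z_j + \tau_j \I_{q_j}$, with $Z_j := Z R_j^\top$. Expanding $Z_j^\top \Omega(\omega) Z_j = \sum_{i=1}^n \omega_i z_{ij} z_{ij}^\top$, where $z_{ij}^\top$ denotes the $i$-th row of $Z_j$, the rank-one bound $z_{ij}z_{ij}^\top \preceq \|z_{ij}\|^2 \I_{q_j}$ together with the crude estimate $\|z_{ij}\|^2 \le \|z_i\|^2 \le \tr(Z^\top Z)$ (where $z_i^\top$ is the $i$-th row of $Z$; the last inequality holds because each $\|z_{i'}\|^2 \ge 0$) yields $\tilde S_{jj} \preceq \bigl(\sum_{i=1}^n \omega_i \tr(Z^\top Z) + \tau_j\bigr)\I_{q_j}$. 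Chaining this with the first step establishes the claim.

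I do not foresee any real obstacle: once the block decomposition is set up, the argument is just a short chain of Löwner-order monotonicity steps. The only bookkeeping point worth flagging is the identification $\tilde S_{jj} = R_j \tilde S R_j^\top$ in the chosen partition, which is transparent because $R_j$ is built from canonical unit rows; and the deliberately loose bound $\|z_{ij}\|^2 \le \tr(Z^\top Z)$ is precisely what allows $\sum_i \omega_i$ to factor out and produces the form in the statement, rather than the sharper but $j$-dependent quantity $\sum_i \omega_i \|z_{ij}\|^2$.
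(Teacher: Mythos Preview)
Your argument is correct, but it takes a somewhat different route from the paper's. The paper bounds the full $q\times q$ matrix $\tilde S$ from above by a diagonal matrix via the eigenvalue--trace inequality, namely $\tilde Z^\top(\I-P_{\tilde X})\tilde Z \preceq \lambda_{\max}\I_q \preceq \tr(\tilde Z^\top(\I-P_{\tilde X})\tilde Z)\,\I_q \preceq \sum_i \omega_i \tr(Z^\top Z)\,\I_q$, then inverts the L\"owner inequality globally, extracts the $j$-th diagonal block (trivial on the diagonal right-hand side), and finally inverts once more. You instead go straight to the $j$-th block by the Schur-complement identity $(R_j\tilde S^{-1}R_j^\top)^{-1}=\tilde S_{jj}-\tilde S_{j,-j}\tilde S_{-j,-j}^{-1}\tilde S_{-j,j}\preceq \tilde S_{jj}$, and only then bound $\tilde S_{jj}$ using $\I-P_{\tilde X}\preceq \I$ and the rank-one estimate $z_{ij}z_{ij}^\top\preceq \|z_{ij}\|^2\I_{q_j}\preceq \tr(Z^\top Z)\I_{q_j}$. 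Your approach avoids the eigenvalue step and the double use of operator-monotone inversion, at the cost of invoking the Schur complement formula; the paper's approach avoids Schur complements but needs the global bound to be diagonal so that block extraction commutes with inversion on the right-hand side. Both are short and entirely elementary; the difference is essentially ``bound globally, then restrict'' versus ``restrict first, then bound''.
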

\begin{proof} Let $\lambda_{max}$ denote the largest eigenvalue for $\tilde{Z}^\top(\I - P_{\tilde{X}})\tilde{Z}$, then
\begin{align*}
\tilde{S} = \tilde{Z}^\top(\I - P_{\tilde{X}})\tilde{Z} + D(\tau)
\preceq \lambda_{max} \I_{q} + D(\tau)
\preceq \tr(\tilde{Z}^\top(\I - P_{\tilde{X}})\tilde{Z})\I_{q} + D(\tau), 
\end{align*}
where $D(\tau) = \oplus_{j=1}^{r} \tau_{j} \I_{q_{j}}$ as defined before, and the second inequality follows from the fact that $\tilde{Z}^\top(\I - P_{\tilde{X}})\tilde{Z}$ is a positive semidefinite matrix. Now
\begin{align*}
  \tr(\tilde{Z}^\top (\I - P_{\tilde{X}})\tilde{Z}) \le \tr(\tilde{Z}^\top \tilde{Z})&=\tr(Z^\top \Omega(\omega) Z)\\
                                                                                     &=\tr\bigg(\sum_{i=1}^{n}\omega_{i}z_{i}z_{i}^\top\bigg)\\
&=\sum_{i=1}^{n} \tr(\omega_{i}z_{i}z_{i}^\top)=\sum_{i=1}^{n} \omega_{i} \tr(z_{i}z_{i}^\top)
\le \sum_{i=1}^{n} \omega_{i} \tr(Z^\top Z),  
\end{align*}
where $z_{i}^\top$ denotes the $i^{th}$ row of the $Z$ matrix, and the first inequality is due to the fact that $\tilde{Z}^\top P_{\tilde{X}}\tilde{Z}$ is a positive semidefinite matrix. Thus $\tilde{S} \preceq \sum_{i=1}^{n} \omega_{i} \tr(Z^\top Z)\I_{q} + D(\tau)$. Hence, $\tilde{S}^{-1} \succeq (\sum_{i=1}^{n}\omega_{i} \tr(Z^\top Z)\I_{q} + D(\tau))^{-1}$. Recall that $R_{j}u =u_{j}$. Extracting the result of the $j^{th}$ random effect, we obtain: \begin{align*}
R_{j}\tilde{S}^{-1}{R_{j}}^\top \succeq R_{j}\bigg(\sum_{i=1}^{n}\omega_{i} \tr(Z^\top Z)\I_{q} + D(\tau)\bigg)^{-1} {R_{j}}^\top = \bigg(\sum_{i=1}^{n}\omega_{i} \tr(Z^\top Z) + \tau_{j}\bigg)^{-1}\I_{q_{j}}. 
\end{align*}
Thus, we have 
$(R_{j}\tilde{S}^{-1}{R_{j}}^\top)^{-1} \preceq (\sum_{i=1}^{n}\omega_{i} \tr(Z^\top Z) + \tau_{j})\I_{q_{j}}.$
\end{proof}

\section{Some properties of the P\'{o}lya-Gamma distributions}
\begin{lemma}
\label{le:eomega}
Suppose $\omega \sim PG(a,b)$.
\begin{enumerate}
\item If $a \ge 1, \, b \ge 0$, then for $0<s \le 1$,
$\E(\omega^{-s}) \le 2^{s}b^{s} + L(s)$, where $L(s)$ is a constant depending on $s$.
\item If $a < 1, \, b \ge 0$, then for $0<s < a$,
$\E(\omega^{-s}) \le 2^{-s} (\pi^2 + b^2)^s \frac{\Gamma(a-s)}{\Gamma(a)}$.
\end{enumerate}
\end{lemma}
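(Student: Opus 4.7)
My plan is to handle the two cases separately, beginning with Part 2 since it is the more direct.

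For Part 2 ($a < 1$, $0 < s < a$), I would exploit the Polson--Scott--Windle series representation stated in the excerpt: $\omega \overset{d}{=} (2\pi^2)^{-1} \sum_{i=1}^\infty g_i/[(i-1/2)^2 + b^2/(4\pi^2)]$ with $g_i \overset{iid}{\sim}$ Gamma$(a,1)$. Every summand is nonnegative, so keeping only the $i=1$ term yields $\omega \geq g_1/\bigl(2\pi^2[1/4 + b^2/(4\pi^2)]\bigr) = 2g_1/(\pi^2 + b^2)$, and hence $\omega^{-s} \leq 2^{-s}(\pi^2+b^2)^s g_1^{-s}$. Taking expectations and using the standard gamma negative-moment identity $\E[g_1^{-s}] = \Gamma(a-s)/\Gamma(a)$ (finite precisely because $s < a$) gives exactly the stated bound.

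For Part 1 ($a \geq 1$, $0 < s \leq 1$) the delicate point is producing the sharp coefficient $2^s$ on $b^s$. I would proceed in three steps. \emph{(i) Reduce to $a=1$.} The Laplace transform $\E[e^{-t\omega}] = \cosh^a(b/2)/\cosh^a(\sqrt{b^2+2t}/2)$ multiplies when $a$ adds, so $PG(a,b)$ is the convolution $PG(1,b)*PG(a-1,b)$ for $a > 1$; coupling $\omega = \omega_1 + \omega'$ with $\omega_1 \sim PG(1,b)$ and $\omega' \sim PG(a-1,b)$ independent gives $\omega \geq \omega_1$ and thus $\E[\omega^{-s}] \leq \E[\omega_1^{-s}]$ (the case $a=1$ needs no reduction). \emph{(ii) Sharp bound on the first negative moment.} Using $\omega_1^{-1} = \int_0^\infty e^{-t\omega_1}\, dt$, Fubini, and the substitution $u = \sqrt{b^2+2t}$,
\[
\E[\omega_1^{-1}] = \cosh(b/2) \int_0^\infty \cosh^{-1}\!\bigl(\sqrt{b^2+2t}/2\bigr)\, dt = \cosh(b/2) \int_b^\infty \frac{u\, du}{\cosh(u/2)}.
\]
The bound $1/\cosh(u/2) \leq 2 e^{-u/2}$ and the elementary integral $\int_b^\infty u e^{-u/2}\, du = (2b+4) e^{-b/2}$ give $\E[\omega_1^{-1}] \leq 2\cosh(b/2)(2b+4) e^{-b/2} = (1+e^{-b})(2b+4) = 2(b+2)(1+e^{-b})$. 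A one-line check shows $(b+2) e^{-b}$ is decreasing on $[0,\infty)$ with maximum $2$ at $b=0$, so in fact $\E[\omega_1^{-1}] \leq 2b + 8$. \emph{(iii) Pass from $s=1$ to $s \in (0,1]$.} Since $x \mapsto x^s$ is concave on $[0,\infty)$ for such $s$, Jensen's inequality yields $\E[\omega_1^{-s}] = \E[(\omega_1^{-1})^s] \leq \bigl(\E[\omega_1^{-1}]\bigr)^s \leq (2b+8)^s$. Subadditivity $(x+y)^s \leq x^s + y^s$ for $x,y \geq 0$ (again from concavity) then gives $(2b+8)^s \leq 2^s b^s + 8^s$, so that $L(s) = 8^s$ works.

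The main obstacle is obtaining the \emph{exact} leading coefficient $2^s$: attempts to bound the integral representation of $\E[\omega^{-s}]$ directly with $0 < s \leq 1$ tend to pay a factor of $2$ coming from the crude estimate $1/\cosh(u/2) \leq 2 e^{-u/2}$, producing a coefficient $2^{s+1}$ instead of the required $2^s$. The trick of establishing the sharp bound first at $s=1$ and then downscaling via Jensen plus subadditivity neatly circumvents this loss, so the only substantive computation is the integral identity in Step (ii).
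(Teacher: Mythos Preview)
Your proof is correct. Part~2 matches the paper's argument exactly. For Part~1, however, your route is genuinely different from the paper's and substantially cleaner.

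The paper works directly with the $PG(1,b)$ density. For $b=0$ it bounds $\E[\omega^{-s}]$ by $\E[\omega^{-1}]+1$ and evaluates the latter termwise via Fubini, obtaining $8C+1$ with $C$ Catalan's constant. For $b>0$ it expresses $\E[\omega^{-s}]$ as a series of modified Bessel functions $K_{s+1/2}$, derives upper and lower bounds on $K_{s+1/2}$ from its integral representation, rewrites the resulting alternating sum via the Lerch transcendent $\Phi$, and then argues by a dominated-convergence continuity argument that the excess over $2^s b^s$ is a bounded continuous function of $b$ vanishing at $0$ and $\infty$. The two regimes $b\ge\epsilon$ and $0<b<\epsilon$ are patched together, and the final $L(s)$ is nonexplicit.

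Your approach bypasses all of this: you compute $\E[\omega_1^{-1}]$ in one stroke from the Laplace transform via $\omega_1^{-1}=\int_0^\infty e^{-t\omega_1}\,dt$, bound it by $2b+8$, and then downscale to general $s\in(0,1]$ with Jensen plus subadditivity. This avoids Bessel functions, Lerch transcendents, and the case split on $b$ entirely, and it yields the explicit constant $L(s)=8^s$. The reduction from $a\ge1$ to $a=1$ via the convolution structure is the same in both proofs. The only thing the paper's method might offer is a slightly tighter implicit constant in some regimes, but for the purposes of the drift-condition argument in Theorem~\ref{theoremimproper} your bound is equally serviceable and far more transparent.
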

\begin{proof} 
We first prove part $1$ for $a =1$. The probability density function of a $PG(1,b)$ random variable is 
\begin{align*}
f(x \mid 1,b) = \cosh(b/2) \sum_{\ell=0}^{\infty}(-1)^\ell \frac{(2\ell+1)}{\sqrt{2\pi x^{3}}}\exp\Big[-\frac{(2\ell+1)^{2}}{8x}-\frac{b^{2}}{2}x \Big], \, x > 0.
\end{align*}
We consider the two cases $b=0$ and $b > 0$ separately.\\
\textbf{Case 1}: $b=0$. Since $0<s \le 1$, for any $x>0$, we have $x^{-s} \le x^{-1} + 1$. Then,
\begin{align*}
\E(\omega^{-s}) \le \int_{0}^{\infty} (x^{-1} + 1) f(x \mid 1,0)dx = \int_{0}^{\infty} x^{-1} f(x \mid 1,0) dx + 1. 
\end{align*}
Now, 
\begin{align}
\label{eq:integx}
\int_{0}^{\infty} x^{-1} f(x \mid 1,0) dx &= \int_{0}^{\infty} x^{-1} \sum_{\ell=0}^{\infty}(-1)^\ell \frac{(2\ell+1)}{\sqrt{2\pi x^{3}}}\exp\Big[-\frac{(2\ell+1)^{2}}{8x} \Big]dx\nonumber\\
&=\int_{0}^{\infty} \sum_{\ell=0}^{\infty}(-1)^\ell x^{-\frac{5}{2}} \frac{(2\ell+1)}{\sqrt{2\pi}}\exp\Big[-\frac{(2\ell+1)^{2}}{8x} \Big] dx.
\end{align}
Let $h_{1}(x,\ell) = (-1)^\ell x^{-\frac{5}{2}} \frac{(2\ell+1)}{\sqrt{2\pi}}\exp\big{[}-\frac{(2\ell+1)^{2}}{8x} \big{]}$, then 
\begin{align*}
\sum_{\ell=0}^{\infty} \int_{0}^{\infty} \abs{h_{1}(x,\ell)} dx &= \sum_{\ell=0}^{\infty}  \frac{(2\ell+1)}{\sqrt{2\pi}} \int_{0}^{\infty} x^{-\frac{5}{2}} \exp\Big{[}-\frac{(2\ell+1)^{2}}{8x} \Big{]} dx\\& =8 \sum_{\ell=0}^{\infty} \frac{1}{(2\ell+1)^{2}} < \infty.
\end{align*}
Hence, $\abs{h_{1}}$ is integrable with respect to the product measure of the counting measure and the Lebesgue measure. By Fubini's Theorem, from \eqref{eq:integx} we have
\begin{align}
\label{eq:fubini}
\int_{0}^{\infty} x^{-1} f(x \mid 1,0) dx &=  \sum_{\ell=0}^{\infty}(-1)^\ell \frac{(2\ell+1)}{\sqrt{2\pi}} \int_{0}^{\infty} x^{-\frac{5}{2}} \exp\Big[-\frac{(2\ell+1)^{2}}{8x} \Big] dx \nonumber \\ &= 8 \sum_{\ell=0}^{\infty}(-1)^\ell (2\ell+1)^{-2} =8C,
\end{align} 
where C is Catalan's constant. Hence, $\E(\omega^{-s}) \le 8C + 1$.\\
\textbf{Case 2}: $b > 0$. Note that
\begin{align}
\label{eq:expectomega}
\E(\omega^{-s}) &= \int_{0}^{\infty} x^{-s} f(x \mid 1,b) dx \nonumber \\ &= \int_{0}^{\infty} x^{-s-\frac{3}{2}} \cosh(b/2) \sum_{\ell=0}^{\infty}(-1)^\ell \frac{(2\ell+1)}{\sqrt{2\pi }} \exp\Big{[}-\frac{(2\ell+1)^{2}}{8x}-\frac{b^{2}}{2}x \Big{]}dx.
\end{align}
According to $10.32.10$ in \cite{olver2010math}, we have
\begin{align}
\label{eq:intebessel}
\int_{0}^{\infty} x^{-s-\frac{3}{2}} \exp\Big[-\frac{(2\ell+1)^{2}}{8x} -\frac{b^{2}}{2}x     \Big] dx  = 2 K_{s+\frac{1}{2}}\Big{(}\frac{b(2\ell+1)}{2}\Big{)} \Big{(}\frac{2b}{2\ell+1}\Big{)}^{s+\frac{1}{2}},
\end{align}
where $K_{v}(\cdot)$ is the modified Bessel function of the second kind of order $v$. For $x > 0$, according to $10.32.8$ in \cite{olver2010math},
\begin{align}
K_{s+\frac{1}{2}}(x)  &= \frac{\sqrt{\pi}(\frac{1}{2}x)^{s+\frac{1}{2}}}{\Gamma(s+1)} \int_{1}^{\infty} e^{-xt}(t^{2}-1)^{s}dt  \nonumber\\
&= \frac{\sqrt{\pi}(\frac{1}{2}x)^{s+\frac{1}{2}}}{\Gamma(s+1)} e^{-x}\int_{0}^{\infty} e^{-xt}(t^{2}+2t)^{s}dt  \label{eq:besselupper1} \\
&\le \frac{\sqrt{\pi}(\frac{1}{2}x)^{s+\frac{1}{2}}}{\Gamma(s+1)} e^{-x}\int_{0}^{\infty} e^{-xt}(t^{2s}+2^{s}t^{s})dt  \nonumber\\
& =  \frac{\sqrt{\pi}(\frac{1}{2}x)^{s+\frac{1}{2}}}{\Gamma(s+1)} e^{-x}\Big{(} \frac{\Gamma(2s+1)}{x^{2s+1}} + 2^{s}\frac{\Gamma(s+1)}{x^{s+1}} \Big{)} \nonumber\\
&=\sqrt{\pi} e^{-x} \Big{[}  \frac{\Gamma(2s+1)}{\Gamma(s+1)}2^{-s-1/2}x^{-s-1/2} + 2^{-1/2}x^{-1/2} \Big{]}. \label{eq:besselupper}
\end{align}
Also, from \eqref{eq:besselupper1} we have
\begin{align}
\label{eq:bessellower}
K_{s+\frac{1}{2}}(x) \ge \frac{\sqrt{\pi}(\frac{1}{2}x)^{s+\frac{1}{2}}}{\Gamma(s+1)} e^{-x}\int_{0}^{\infty} e^{-xt} 2^{s}t^{s}dt =\sqrt{\pi} e^{-x} 2^{-1/2}x^{-1/2}.
\end{align}
Let $h_{2}(x,\ell) =  x^{-s-3/2} \cosh(b/2)(-1)^\ell \frac{(2\ell+1)}{\sqrt{2\pi }}\exp\big{[}-\frac{(2\ell+1)^{2}}{8x}-\frac{b^{2}}{2}x \big{]}$, then
\begin{align*}
\sum_{\ell=0}^{\infty} \int_{0}^{\infty} \abs{h_{2}(x,\ell)} dx &= \sum_{\ell=0}^{\infty} \cosh(b/2) \frac{(2\ell+1)}{\sqrt{2\pi}} \int_{0}^{\infty} x^{-s-\frac{3}{2}} \exp\Big[-\frac{(2\ell+1)^{2}}{8x} -\frac{b^{2}}{2}x     \Big] dx \\
&= \cosh(b/2) \sum_{\ell=0}^{\infty} \frac{(2\ell+1)}{\sqrt{2\pi}} 2 K_{s+\frac{1}{2}}\Big{(}\frac{b(2\ell+1)}{2}\Big{)} \big{(}\frac{2b}{2\ell+1}\big{)}^{s+\frac{1}{2}} \\
&\le 2 \cosh(b/2) \sum_{\ell=0}^{\infty} \frac{(2\ell+1)}{\sqrt{2\pi}} \sqrt{\pi} e^{-\frac{b(2\ell+1)}{2}}  \Big[  \frac{\Gamma(2s+1)}{\Gamma(s+1)}2^{-s-1/2} \\ & \hspace{.25in} \times \big(\frac{b(2\ell+1)}{2} \big)^{-s-1/2} + 2^{-1/2} \big(\frac{b(2\ell+1)}{2}\big{)}^{-1/2}\Big]\Big(\frac{2b}{2\ell+1}\Big)^{s+\frac{1}{2}} \\
&=2^{s} (1+e^{-b})\Bigg[ \sum_{\ell=0}^{\infty} \frac{e^{-b\ell}}{(2\ell+1)^{2s}} \frac{\Gamma(2s+1)}{\Gamma(s+1)} + \sum_{\ell=0}^{\infty}\frac{e^{-b\ell}}{(2\ell+1)^{s}}b^{s}      \Bigg] < \infty.
\end{align*}
The second equality follows by using \eqref{eq:intebessel}. The inequality is based on \eqref{eq:besselupper}.
The convergence of the two series in the last step can be obtained by utilizing the ratio test. Hence, $\abs{h_{2}}$ is integrable with respect to the product measure of the counting measure and the Lebesgue measure. By Fubini's Theorem and \eqref{eq:intebessel}, \eqref{eq:expectomega} becomes
\begin{align}
\label{eq:expectomega2}
\E(\omega^{-s}) =  \cosh(b/2) \sum_{\ell=0}^{\infty}(-1)^\ell \frac{(2\ell+1)}{\sqrt{2\pi }} 2 K_{s+\frac{1}{2}}\Big(\frac{b[2\ell+1]}{2}\Big) \Big(\frac{2b}{2\ell+1}\Big)^{s+\frac{1}{2}}.
\end{align}
When $\ell$ is even, applying \eqref{eq:besselupper} to \eqref{eq:expectomega2}, and when $\ell$ is odd, applying \eqref{eq:bessellower} to \eqref{eq:expectomega2}, we obtain
\begin{align}
\label{eq:expectomega3}
\E(\omega^{-s}) &\le 2 \cosh(b/2) \Bigg\{\sum_{even \; \ell} \frac{(2\ell+1)}{\sqrt{2\pi }} \sqrt{\pi} e^{-\frac{b(2\ell+1)}{2}} 
                  \Big[  \frac{\Gamma(2s+1)}{\Gamma(s+1)}2^{-s-1/2} \Big\{\frac{b(2\ell+1)}{2} \Big\}^{-s-1/2}  \nonumber\\ &   \quad + (b(2\ell+1))^{-1/2}  \Big] - \sum_{odd \; \ell} \frac{(2\ell+1)}{\sqrt{2\pi }} \sqrt{\pi} e^{-\frac{b(2\ell+1)}{2}} 2^{-1/2} \Big(\frac{b(2\ell+1)}{2}\Big)^{-1/2}  \Bigg\}\nonumber\\ &  \quad  \quad \times \bigg(\frac{2b}{2\ell+1}\bigg)^{s+\frac{1}{2}}\nonumber \\
  &=e^{b/2}(1+e^{-b})\bigg\{\sum_{even \; \ell} (2\ell+1)^{-2s} 2^s e^{-b(\ell+1/2)} 
\frac{\Gamma(2s+1)}{\Gamma(s+1)}  \nonumber\\ & \quad + \sum_{even \; \ell} (2\ell+1)^{-s} (2b)^s e^{-b(\ell+1/2)} - \sum_{odd \; \ell} (2\ell+1)^{-s} (2b)^s e^{-b(\ell+1/2)}  \bigg\}\nonumber \\
&=(1+e^{-b})  b^{s} \sum_{\ell=0}^{\infty} (-e^{-b})^{\ell}(\ell+1/2)^{-s} + (1+e^{-b})2^{-s} \frac{\Gamma(2s+1)}{\Gamma(s+1)} \nonumber\\ & \quad \quad \times \sum_{even \; \ell}e^{-b\ell} (\ell+1/2)^{-2s} \nonumber\\
&= (1+e^{-b})b^{s} \Phi(-e^{-b},s,1/2) + (1+e^{-b})2^{-s} \frac{\Gamma(2s+1)}{\Gamma(s+1)} \sum_{k=0}^{\infty} e^{-2bk} (2k+1/2)^{-2s}  \nonumber\\
&= (1+e^{-b})\frac{b^{s}}{\Gamma(s)} \int_{0}^{\infty} \frac{t^{s-1}e^{-t/2}}{1+e^{-b-t}} dt + (1+e^{-b})2^{-s} \frac{\Gamma(2s+1)}{\Gamma(s+1)}  \nonumber\\  & \quad \quad \times \sum_{k=0}^{\infty} e^{-2bk} (2k+1/2)^{-2s}, 
\end{align}
where $\Phi(\cdot)$ is the Lerch transcendent function. \\
\indent For fixed $s>0$, let 
\begin{align*}
f(b)&= (1+e^{-b})\frac{b^{s}}{\Gamma(s)} \int_{0}^{\infty} \frac{t^{s-1}e^{-t/2}}{1+e^{-b-t}} dt -2^s b^s \nonumber \\ &= \frac{b^{s}}{\Gamma(s)} \Big[ (1+e^{-b}) \int_{0}^{\infty}  \frac{t^{s-1} e^{-t/2}}{1+e^{-b-t}}dt -\int_{0}^{\infty}  t^{s-1} e^{-t/2} dt\Big] \nonumber \\ &=\frac{b^{s} e^{-b}}{\Gamma(s)} \int_{0}^{\infty}  \frac{(1-e^{-t})}{1+e^{-b-t}}  t^{s-1} e^{-t/2} dt.\nonumber
\end{align*}
Since $(1-e^{-t})  t^{s-1} e^{-t/2}/(1+e^{-b-t}) \le    t^{s-1} e^{-t/2}$, which is integrable, by the Dominated Convergence Theorem (DCT), it follows that $f(b)$ is a continuous function of $b$. Another application of DCT shows that
\begin{align*}
\lim_{b \to \infty} \int_{0}^{\infty}  \frac{1-e^{-t}}{1+e^{-b-t}}  t^{s-1} e^{-t/2} dt  =\int_{0}^{\infty} (1-e^{-t})  t^{s-1} e^{-t/2} dt \le \int_{0}^{\infty} t^{s-1} e^{-t/2} dt =2^s \Gamma(s). 
\end{align*}
Hence, $\lim_{b \to \infty} f(b) = 0$. Since $f(b)$ is a continuous function of $b$, $f(0)=0$ and $\lim_{b \to \infty} f(b) = 0$, we can conclude that $\abs{f(b)}$ can be bounded by a positive constant $f_{0}$, hence, 
\begin{align}
\label{eq:posupperbound}
(1+e^{-b})\frac{b^{s}}{\Gamma(s)} \int_{0}^{\infty} \frac{t^{s-1}e^{-t/2}}{1+e^{-b-t}} dt \le 2^{s} b^{s} + f_{0}.
\end{align}
As for the second term in \eqref{eq:expectomega3}, we have
\begin{align}
\label{eq:inequality}
& (1+e^{-b})2^{-s} \frac{\Gamma(2s+1)}{\Gamma(s+1)} \sum_{k=0}^{\infty} e^{-2bk} (2k+1/2)^{-2s} \nonumber\\ &\le (1+e^{-b})2^{-s} \frac{\Gamma(2s+1)}{\Gamma(s+1)}[1 / (e^{2b}-1)+4^s]. 
\end{align}
Here, the inequality is due to the fact $(2k+ 1/2)^{-2s} \le 1$ for $k \ge 1$. Note that for $b \ge \epsilon$, where $\epsilon > 0$ is arbitrary, the upper bound of \eqref{eq:inequality} becomes
\begin{align*}
(1+e^{-b})2^{-s} \frac{\Gamma(2s+1)}{\Gamma(s+1)}[1 / (e^{2b}-1)+4^s] \le (1+e^{-\epsilon})2^{-s} \frac{\Gamma(2s+1)}{\Gamma(s+1)}[1 / (e^{2\epsilon}-1)+4^s].
\end{align*}
Thus, combining \eqref{eq:posupperbound} with the above result, from \eqref{eq:expectomega3} we have for $b \ge \epsilon$,
\begin{align}
\label{eq:lemmaexpectomega}
\E(\omega^{-s}) \le 2^s b^s + f_{0} + L(s,\epsilon),
\end{align}
where $L(s,\epsilon) = (1+e^{-\epsilon})2^{-s} \frac{\Gamma(2s+1)}{\Gamma(s+1)}[1 / (e^{2\epsilon}-1)+4^s]$.

\indent Now, we consider $0<b < \epsilon$. Let $k(b)\equiv\E(\omega^{-s})$, where $\omega \sim PG(1,b)$. Then,
\begin{align}
\label{eq:limk}
\lim_{b \to 0} k(b) = \lim_{b \to 0}\cosh(b/2) \lim_{b \to 0} \int_{0}^{\infty} j(b,x) dx  = \lim_{b \to 0} \int_{0}^{\infty} j(b,x) dx,
\end{align}
where 
\begin{align*}
j(b,x) =  x^{-s-\frac{3}{2}} \sum_{\ell=0}^{\infty}(-1)^\ell \frac{(2\ell+1)}{\sqrt{2\pi }}\exp\Big[-\frac{(2\ell+1)^{2}}{8x}-\frac{b^{2}}{2}x \Big].
\end{align*}
Note that
\begin{align*}
j(b,x) \le (x^{-1}+1)x^{-\frac{3}{2}} \sum_{\ell=0}^{\infty}(-1)^\ell \frac{(2\ell+1)}{\sqrt{2\pi }}\exp\Big[-\frac{(2\ell+1)^{2}}{8x} \Big] = j(x), \,\text{say}.
\end{align*}
From \eqref{eq:fubini}, it follows that $\int_{0}^{\infty} j(x) dx \le 8C + 1$. Then by the DCT, from \eqref{eq:limk}, we have
\begin{align*}
\lim_{b \to 0} k(b) = \lim_{b \to 0} \int_{0}^{\infty} j(b,x) dx= k(0).
\end{align*}
So, $k(b) = \E (\omega^{-s})$ is continuous at $b=0$. Recall that $\E(\omega^{-s}) \le 8C + 1$ for $b=0$ and $0<s \le 1$. Thus, there exists some small $\epsilon>0$ such that $\E (\omega^{-s}) \le 8C + 2$ for $0<b<\epsilon$. Combining this result with \eqref{eq:lemmaexpectomega}, we have $\E(\omega^{-s}) \le 2^s b^s + L(s)$, where $L(s) = \max\{f_{0} + L(s,\epsilon),8C+2\}$. Thus, part $1$ is proved for $a = 1 $.

 Next, we prove the conclusion for $a>1$. From \cite{polson2013bayesian}, when $\omega \sim PG(a,b)$, we have
\begin{align}
\label{eq:omega_gamma}
\omega \overset{d} = \frac{1}{2\pi^2} \sum_{\ell=1}^{\infty} \frac{g_{\ell}}{(\ell-1/2)^2 +b^2/(4\pi^2)},
\end{align}
where $g_{\ell}$'s are mutually independent Gamma$(a, 1)$ random variables. Since $a > 1$, $g_{\ell} \overset{d} =  \tilde{g}_{\ell} + g^*_{\ell}$, where $\tilde{g}_{\ell}$ and $g^*_{\ell}$ are independent random variables following Gamma $(a-1,1)$ and Gamma $(1,1)$, respectively. Let $x_{1} = (1/[2\pi^2]) \sum_{\ell=1}^{\infty} g^*_{\ell}/[(\ell-1/2)^2 +b^2/(4\pi^2)]$. Then, $x_{1} \sim PG(1,b)$. Thus, we have $\E(x_{1}^{-s}) \le 2^s b^s + L(s)$. Since for $0 < s \le 1$, $\E (\omega^{-s}) \le \E (x_{1}^{-s})$, the same conclusion follows for $\omega \sim PG(a,b)$, where $a>1$. Thus, the proof for part $1$ is complete. 

Next, we prove part $2$. From \eqref{eq:omega_gamma}, we have
\begin{align*}
\E \,\omega^{-s} &= \E \,\Big[\frac{1}{2\pi^2} \sum_{\ell=1}^{\infty} \frac{g_{\ell}}{(\ell-1/2)^2+b^2/(4\pi^2)} \Big]^{-s}\\ &\le \E \,\Big[\frac{1}{2\pi^2}  \frac{g_{1}}{(1-1/2)^2+b^2/(4\pi^2)} \Big]^{-s} \\ &= \Big(\frac{\pi^2 + b^2}{2}\Big)^s \int_{0}^{\infty} g_{1}^{-s}\frac{1}{\Gamma(a)} g_{1}^{a-1} \exp{(-g_{1})}\, d \,g_{1} \\ &= 2^{-s} (\pi^2 + b^2)^s \frac{\Gamma(a-s)}{\Gamma(a)}.
\end{align*}
Thus, the proof for part $2$ is complete.
\end{proof}

\begin{lemma}
\label{le:e2omega}
If $\omega \sim PG(a,b), \, a >0, \, b \ge 0$, then $\E\,\omega \le a/4$.
\end{lemma}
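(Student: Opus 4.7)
The plan is to exploit the infinite-series representation of $\omega$ that was already recalled in equation~\eqref{eq:omega_gamma}, namely
\[
\omega \overset{d}{=} \frac{1}{2\pi^2} \sum_{\ell=1}^{\infty} \frac{g_{\ell}}{(\ell-1/2)^2 + b^2/(4\pi^2)},
\]
where $g_{\ell} \overset{iid}{\sim}$ Gamma$(a,1)$, so that $\E g_{\ell} = a$. Since every term in the series is nonnegative, Tonelli's theorem permits me to interchange expectation and summation, giving
\[
\E\,\omega \;=\; \frac{a}{2\pi^2} \sum_{\ell=1}^{\infty} \frac{1}{(\ell-1/2)^2 + b^2/(4\pi^2)}.
\]

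Next, I use the elementary inequality $(\ell-1/2)^2 + b^2/(4\pi^2) \ge (\ell-1/2)^2$, valid because $b \ge 0$. This lets me replace the denominators by the larger quantity $(\ell-1/2)^2$ and bound
\[
\E\,\omega \;\le\; \frac{a}{2\pi^2} \sum_{\ell=1}^{\infty} \frac{1}{(\ell-1/2)^2}.
\]

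The final step is to evaluate the series, which is a classical one. Reindexing with $k=\ell-1$, we have $\sum_{\ell=1}^{\infty} (\ell-1/2)^{-2} = \sum_{k=0}^{\infty} (k+1/2)^{-2} = 4\sum_{k=0}^{\infty}(2k+1)^{-2} = 4 \cdot \pi^2/8 = \pi^2/2$, where the evaluation of the odd-reciprocal-squares sum is standard. Substituting yields $\E\,\omega \le (a/(2\pi^2))\cdot(\pi^2/2) = a/4$, completing the proof. There is no real obstacle here; the only point requiring a line of justification is the interchange of expectation and sum, which is immediate from Tonelli because all summands are positive random variables.
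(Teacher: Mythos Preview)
Your proof is correct. The first two steps---taking expectations through the series representation \eqref{eq:omega_gamma} and dropping the $b^2/(4\pi^2)$ term---coincide exactly with what the paper does. The difference lies only in how the resulting series $\sum_{\ell\ge 1}(\ell-1/2)^{-2}$ is handled. You evaluate it directly via the standard identity $\sum_{k\ge 0}(2k+1)^{-2}=\pi^2/8$, which is elementary and gives $a/4$ in one line. The paper instead recognizes the bound as $a\,\E\,\omega_2$ with $\omega_2\sim PG(1,0)$, then computes $\E\,\omega_2$ by writing $PG(1,0)=J^*(1,0)/4$, integrating against the Jacobi theta density, justifying a Fubini interchange, and invoking the Lerch transcendent value $\Phi(-1,3,1/2)=\pi^3/4$. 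Your route is shorter and avoids both the density of $J^*(1,0)$ and the special-function machinery; the paper's route, while heavier, makes explicit that the bound is achieved with equality at $b=0$ (since then $\E\,\omega = a\,\E\,\omega_2 = a/4$).
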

\begin{proof}
From \eqref{eq:omega_gamma}, we have
\begin{align}
\label{eq:ineqalityomega}
\E\, \omega  &= \E \Big[\frac{1}{2\pi^2} \sum_{\ell=1}^{\infty} \frac{g_{\ell}}{(\ell-1/2)^2+b^2/(4\pi^2)} \Big] \nonumber\\ 
&\le \E \Big[\frac{1}{2\pi^2} \sum_{\ell=1}^{\infty} \frac{g_{\ell}}{(\ell-1/2)^2} \Big]\nonumber\\ 
&= \frac{1}{2\pi^2} \sum_{\ell=1}^{\infty} \frac{a}{(\ell-1/2)^2} = a \E \, \omega_{2},
\end{align}
where $\omega_{2} \sim PG(1,0)$. By \cite{polson2013bayesian}, $PG(1,0) = J^*(1,0)/4$. From \cite{devroye2009exact}, the density for $J^*(1,0)$ is 
\begin{align*}
f^*(x) = \pi \sum_{\ell=0}^{\infty} (-1)^\ell(\ell+ 1/2) \exp\Big[ -\frac{(\ell+1/2)^2 \pi^2 x}{2} \Big],
\end{align*}
then,
\begin{align*}
\E \,J^*(1,0) &= \int_{0}^{\infty} x \, \pi \sum_{\ell=0}^{\infty} (-1)^\ell(\ell+ 1/2) \exp\Big[ -\frac{(\ell+1/2)^2 \pi^2 x}{2} \Big] dx \\
&= \sum_{\ell=0}^{\infty} \int_{0}^{\infty} x \, \pi  (-1)^\ell(\ell+ 1/2) \exp\Big{[} -\frac{(\ell+1/2)^2 \pi^2 x}{2} \Big{]} dx\\
&= \sum_{\ell=0}^{\infty} \pi  (-1)^\ell(\ell+ 1/2) \Gamma(2) \Big(\frac{1}{2} \Big(\ell+ \frac{1}{2} \Big)^2 \pi^2 \Big)^{-2} \\&= \frac{4}{\pi^3} \sum_{\ell=0}^{\infty} \frac{(-1)^{\ell}}{(\ell+1/2)^3}= \frac{4}{\pi^3} \Phi\Big(-1,3,\frac{1}{2}\Big)= \frac{4}{\pi^3} \frac{\pi^3}{4}=1,
\end{align*}
where $\Phi(\cdot)$ is the Lerch transcendent function and $\Phi(-1,3,\frac{1}{2}) = \pi^3/4$. Note that, we achieve the second equality above by applying the following logic. Let $h_{3}(x,\ell) = x \, \pi  (-1)^\ell(\ell+ \frac{1}{2}) \exp\big{[} -\frac{(\ell+1/2)^2 \pi^2 x}{2} \big{]}$, then 
\begin{align*}
\sum_{\ell=0}^{\infty} \int_{0}^{\infty} \abs{h_{3}(x,\ell)} dx & = \sum_{\ell=0}^{\infty} \int_{0}^{\infty} x \, \pi  \Big(\ell+ \frac{1}{2}\Big) \exp\Big{[} -\frac{(\ell+1/2)^2 \pi^2 x}{2} \Big{]} dx\\ & = \frac{4}{\pi^3} \sum_{\ell=0}^{\infty} \frac{1}{(\ell+1/2)^3} < \infty, 
\end{align*}
Hence, $h_{3}(x,\ell)$ is integrable with respect to the product measure of the counting measure and the Lebesgue measure. Thus, by Fubini's Theorem, the second equality follows. Consequently, $\E\omega_{2} = \E\, J^*(1,0)/4= 1/4$. From \eqref{eq:ineqalityomega}, it follows $\E\, \omega \le a/4$.
\end{proof}

\begin{remark} \cite{wang2018geometric} proved Lemma \ref{le:eomega} in the special case when $a=1$. Although their result is correct as stated, their proof has an error which can be repaired following the techniques used in the proof of Lemma \ref{le:eomega} here. Lemma \ref{le:e2omega} for $a=1$ is also proved in \cite{wang2018geometric}. 
\end{remark}

\section{Proof of Theorem \ref{theoremimproper}}
\begin{proof} We first prove the geometric ergodicity of the $\tilde{\Psi}$ chain by establishing a drift condition. We consider the following drift function
\begin{align}
\label{eq:theorem2driftfunction}
V(\eta) = \sum_{i=1}^{n}\abs{ m_{i}^\top \eta}   + \sum_{j=1}^{r}(u_{j}^\top u_{j})^{-c},
\end{align}
where $c \in (0,1/2)$ to be determined later.

Since $M$ has full rank, $V(\eta): \mathbb{R}^{p+q}\backslash N \rightarrow[0,\infty)$ is unbounded off compact sets. We prove that, for any $\eta,\eta' \in \mathbb{R}^{p+q}\backslash N$, there exist constants $\rho \in [0,1)$ and $L>0$ such that
\begin{align}
\label{eq:theorem2driftcondition}
\E[V(\eta) \mid \eta\prime] = \E\{\E[V(\eta) \mid  \omega, \tau, y]\mid \eta\prime,y\} \le \rho V(\eta\prime) + L.
\end{align}
The first term in the drift function is
$\sum_{i=1}^{n}\abs{m_{i}^\top \eta} = l^\top M \eta$, where
$l = (l_{1},l_{2},...,l_{n})$ is defined as $l_{i} = 1$ if
$m_{i}^\top \eta \ge 0, l_{i} = -1$ if $m_{i}^\top \eta < 0$. Since
$Q=0$, from \eqref{eq:distrieta} we have
\begin{align}
\label{eq:theorem2term1a}
&\E \biggl[\sum_{i=1}^{n} \abs{ m_{i}^\top \eta} \bigg{|}  \omega, \tau, y \biggr] \nonumber\\ &=  l^\top M (M^\top \Omega(\omega) M + B(\tau))^{-1} M^\top \kappa \nonumber\\
&\le  \sqrt{l^\top M (M^\top \Omega(\omega) M + B(\tau))^{-1} M^\top l}  \sqrt{\kappa^\top M (M^\top \Omega(\omega) M  + B(\tau))^{-1} M^\top \kappa} \nonumber\\
& \le  \sqrt{l^\top M (M^\top \Omega(\omega) M )^{-1} M^\top l}  \sqrt{\kappa^\top M (M^\top \Omega(\omega) M  )^{-1} M^\top \kappa}  \nonumber\\
&=  \sqrt{l^\top \Omega(\omega)^{-1/2} P_{\Omega(\omega)^{1/2}M} \Omega(\omega)^{-1/2} l}  \sqrt{\kappa^\top M (M^\top \Omega(\omega) M  )^{-1} M^\top \kappa}  \nonumber\\
&\le  \sqrt{\sum_{i=1}^{n} \frac{1}{\omega_{i}}} \sqrt{\kappa^\top M (M^\top \Omega(\omega) M  )^{-1} M^\top \kappa},
\end{align}
where the first inequality follows from the Cauchy-Schwarz inequality, $P_{\Omega(\omega)^{1/2}M} \equiv \Omega(\omega)^{1/2} \linebreak M (M^\top \Omega(\omega) M)^{-1} M^\top \Omega(\omega)^{1/2}$ is a projection matrix, and the third inequality follows from the fact that $\I \succeq P_{\Omega(\omega)^{1/2}M}$. Recall that $k_{i} = y_{i} - 1/2, \, i = 1,...,n$. Define, $v_{i}=-2k_{i} m_{i}$ as the $i^{th}$ row of an $n \times (p+q)$ matrix $V$. Note that $v_{i}v_{i}^\top =m_{i}m_{i}^\top, \, i = 1,...,n$. Since the conditions $3$ and $4$ in Theorem \ref{theoremimproper} are in force, by Lemma $3$ in \cite{wang2018geometric}, for the second part of \eqref{eq:theorem2term1a}, we have
\begin{align}
\label{eq:theorem2term1d}
\sqrt{\kappa^\top M(M^\top \Omega(\omega) M)^{-1}M^\top\kappa}=\sqrt{\frac{1}{4} 1^\top V(V^\top \Omega(\omega) V)^{-1}V^\top 1} \le \sqrt{\frac{\rho_{1}}{4} \sum_{i=1}^{n} \frac{1}{\omega_{i}}},
\end{align}
where $\rho_{1} \in [0,1)$ is a constant. Applying \eqref{eq:theorem2term1d} to \eqref{eq:theorem2term1a}, we have
\begin{align}
\label{eq:theorem2term1}
\E \bigg[\sum_{i=1}^{n}\abs{ m_{i}^\top\eta} \bigg{|}  \omega, \tau, y \bigg] \le \frac{\sqrt{\rho_{1}}}{2}   \sum_{i=1}^{n} \frac{1}{\omega_{i}}.
\end{align}
\indent Next, we consider the inner expectation of the second term in the drift function \eqref{eq:theorem2driftfunction}. Note that, for $c \in (0,1/2)$, we have
\begin{align}
\label{eq:theorem2term2a}
  &\E\Big[(u_{j}^\top u_{j})^{-c} \Big{|} \omega, \tau, y\Big]\nonumber\\  &=  \bigg(\sum_{i=1}^{n}\omega_{i} \tr(Z^\top Z) + \tau_{j}\bigg)^{c} \E\bigg[\bigg\{u_{j}^\top  \bigg(\sum_{i=1}^{n}\omega_{i} \tr(Z^\top Z) + \tau_{j}\bigg)\I_{q_{j}} u_{j}\bigg\}^{-c} \bigg{|} \omega, \tau, y\bigg] \nonumber \\
  &\le \bigg[\bigg(\sum_{i=1}^{n}\omega_{i} \tr(Z^\top Z)\bigg)^{c} + \tau_{j}^{c}\bigg]\E\bigg[\bigg\{u_{j}^\top  (R_{j} \tilde{S}^{-1} {R_{j}}^\top)^{-1} u_{j}\bigg\}^{-c} \bigg{|} \omega, \tau, y\bigg]  \nonumber\\
  &\le \frac{2^{-c} \Gamma(-c + q_{j}/2)}{\Gamma(q_{j}/2)} \bigg[\bigg(\sum_{i=1}^{n}\omega_{i} \tr(Z^\top Z)\bigg)^{c} + \tau_{j}^{c} \bigg],
\end{align}
where the first inequality follows from Lemma \ref{lemma:rineq} and the fact that $(a+b)^s \le a^s +b^s$ for $a>0, \, b>0,$ and $0 \le s <1$. For the last inequality, note that, by \eqref{eq:udistribution}, we have $u_{j} \mid \omega,\tau,y \sim N(R_{j}(-\tilde{S}^{-1}\tilde{R}^\top X^\top\kappa+\tilde{S}^{-1}Z^\top\kappa), R_{j}\tilde{S}^{-1} {R_{j}}^\top)$ where $R_{j}$ is a $q_{j} \times q$ matrix consisting of $0$'s and $1$'s such that $R_{j}u = u_{j}$. Thus, given $\omega,\tau,y$, $(R_{j}\tilde{S}^{-1}{R_{j}}^\top)^{-\frac{1}{2}}u_{j}$ has a multivariate normal distribution with the identity covariance matrix. Hence, conditional on $\omega, \tau, y$, the distribution of $u_{j}^\top (R_{j}\tilde{S}^{-1} {R_{j}}^\top)^{-1}u_{j}$ is $\chi_{q_{j}}^{2}(w)$, where $w$ is some (unimportant) noncentrality parameter and $q_{j}$ is the degrees of freedom for the relevant Chi-square distribution. Therefore, by Lemma 4 in \cite{roman2012convergence}, we have
\begin{align*}
\E\Big[\Big\{u_{j}^\top  (R_{j}\tilde{S}^{-1}{R_{j}}^\top)^{-1} u_{j}\Big\}^{-c} \Big{|} \omega, \tau, y\Big]
\le \frac{2^{-c} \Gamma(-c + q_{j}/2)}{\Gamma(q_{j}/2)}.
\end{align*}
Applying the above result, the inequality in \eqref{eq:theorem2term2a} is obtained.

Combining \eqref{eq:theorem2term1} and \eqref{eq:theorem2term2a}, from \eqref{eq:theorem2driftfunction}, we have
\begin{align}
\label{eq:theorem2wholeinner}
E[V(\eta) \mid \omega, \tau, y]  \le  \frac{\sqrt{\rho_{1}}}{2}   \sum_{i=1}^{n} \frac{1}{\omega_{i}}+\sum_{j=1}^{r} \frac{2^{-c} \Gamma(-c + q_{j}/2)}{\Gamma(q_{j}/2)}\Big[ \Big(\sum_{i=1}^{n}\omega_{i} \tr(Z^\top Z)\Big)^{c} + \tau_{j}^{c}\Big].
\end{align}
Next, we consider the outer expectation in \eqref{eq:theorem2driftcondition}. By Lemma \ref{le:eomega}, we have
\begin{align}
\label{eq:theorem2outerterm1}
 \E \Big[\frac{\sqrt{\rho_{1}}}{2}  \sum_{i=1}^{n} \frac{1}{\omega_{i}}\mid \eta^{^{\prime}}, y\Big] &\le \Big[2\sum_{i=1}^{n} \abs{m_{i}^\top \eta'} + nL(1)  \Big] \frac{ \sqrt{\rho_{1}}}{2}  \nonumber\\&= \sqrt{\rho_{1}}  \sum_{i=1}^{n} \abs{m_{i}^\top \eta'} +  \frac{ nL(1)\sqrt{\rho_{1}}}{2}.
\end{align}
For the outer expectation of the other terms on the right hand side of \eqref{eq:theorem2wholeinner}, we now consider the expectation for $\tau_{j}^{c}$. Recall from section \ref{sec:blockgibbs} that $\tau_{j} \mid \eta^{\prime},y \overset{ind}\sim $ Gamma$(a_{j} + q_{j}/2,b_{j} + u_{j}'^\top  u_{j}'/2)$, $j = 1,...,r$. Then, it follows that 
\begin{align*}
\E[\tau_{j}^{c} \mid \eta^{\prime},y] =\frac{\Gamma(a_{j} + \frac{q_{j}}{2} + c)}{\Gamma(a_{j}+\frac{q_{j}}{2})}\Big(b_{j}+\frac{1}{2} u_{j}'^\top u_{j}'\Big)^{-c}.
\end{align*}
Define, $G_{j}(-c) = 2^{c} \Gamma(a_{j} +q_{j}/2 + c)/\Gamma(a_{j} + q_{j}/2)$, $j = 1,2,...,r$. Hence,
\begin{align}
\label{eq:expectau}
\E[\tau_{j}^{c} \mid \eta^{\prime},y] & =2^{-c}G_{j}(-c)\Big(b_{j}+ \frac{u_{j}'^\top u_{j}'}{2}\Big)^{-c} \nonumber \\  & \le G_{j}(-c)[(2b_{j})^{-c}\I_{(0,\infty)}(b_{j}) + (u_{j}'^\top u_{j}')^{-c} \I_{\{0\}}(b_{j})].
\end{align}
Also,
\begin{align}
\label{eq:tauinequality}
\sum_{j=1}^{r} \frac{2^{-c} \Gamma(q_{j}/2-c)}{\Gamma(q_{j}/2)}G_{j}(-c)(u_{j}'^\top u_{j}')^{-c} \I_{\{0\}}(b_{j}) \le \delta_{1}(c)\sum_{j=1}^{r} (u_{j}'^\top u_{j}')^{-c}
\end{align}
where $\delta_{1}(c) = 2^{-c} \max_{j \in A} \frac{\Gamma(q_{j}/2-c)}{\Gamma(q_{j}/2)}G_{j}(-c) \ge 0$. Recall that $A = \{j \in \{1,2,...,r\}: b_{j} = 0\}$. From the condition $1$ of Theorem \ref{theoremimproper}, we have $a_{j} <0$ when $b_{j} = 0$ (i.e. when $j \in A$). From the proof of Proposition $2$ in \cite{roman2012convergence}, it follows that, there exists $c\in C \equiv (0,1/2) \cap (0,\tilde{a})$, where $\tilde{a} = - \max_{j \in A} a_{j}$, such that $\delta_{1}(c) < 1$.

Using \eqref{eq:theorem2outerterm1}, \eqref{eq:expectau}, and Jensen's inequality, from \eqref{eq:theorem2wholeinner}, we obtain
\begin{align}
  \E[V(\eta) \mid \eta\prime] &=\E\{\E[V(\eta \mid  \omega, \tau, y)]\mid \eta\prime,y\}\\ &\le \sqrt{\rho_{1}} \sum_{i=1}^{n} \abs{m_{i}^\top\eta'}  + \frac{nL(1) \sqrt{\rho_{1}}}{2}+ \sum_{j=1}^{r} \frac{2^{-c} \Gamma(-c + q_{j}/2)}{\Gamma(q_{j}/2)} \nonumber\\& \quad
   \times\big\{\big{(}\tr(Z^\top Z) \sum_{i=1}^{n} \E(\omega_{i} \mid \eta',y)\big{)}^{c} +  G_{j}(-c) \nonumber\\& \quad \quad \times \big[(2b_{j})^{-c}\I_{(0,\infty)}(b_{j}) + (u_{j}'^\top u_{j}')^{-c} \I_{\{0\}}(b_{j})\big] \big\} \nonumber \\
&\le \sqrt{\rho_{1}}   \sum_{i=1}^{n} \abs{m_{i}^\top \eta'}+ \delta_{1}(c)\sum_{j=1}^{r} (u_{j}'^\top u_{j}')^{-c}  +L \le \rho V(\eta\prime) + L,  \nonumber
\end{align}
where the second inequality is due to \eqref{eq:tauinequality} and Lemma \ref{le:e2omega}. Here, $\rho=\max\{\sqrt{\rho_{1}}, \delta_{1}(c)\}$, and 
\begin{align*}
L= \frac{n L(1) \sqrt{\rho_{1}}}{2} &+ \sum_{j=1}^{r} \frac{2^{-c} \Gamma(-c + q_{j}/2)}{\Gamma(q_{j}/2)} \Big(\frac{\tr(Z^\top Z)n}{4}\Big)^{c}\\ &+ \sum_{j=1}^{r} \frac{2^{-c} \Gamma(-c + q_{j}/2)}{\Gamma(q_{j}/2)} G_{j}(-c)(2b_{j})^{-c}\I_{(0,\infty)}(b_{j}).
\end{align*} 
Recall that $\rho_{1}, \delta_{1}(c) \in [0,1)$, thus $\rho <1$. Consequently, \eqref{eq:theorem2driftcondition} holds. In addition, we can show that $\tilde{\Psi}$ chain is a Feller Markov chain by the following steps. 
Let $K(\eta', \cdot)$ denote the Mtf corresponding to \eqref{eq:mtd_eta}. To prove that the $\tilde{\Psi}$ chain is a Feller Markov chain is to show that $K(\eta', A)$ is a lower semi-continuous function on $\mathbb{R}^{p+q}\backslash N$ for each fixed open set $A$ in $\mathbb{R}^{p+q}\backslash N$. For a sequence $\{\eta'_{m}\}$, using \eqref{eq:mtd_eta}, Fatou's Lemma and independence of the conditional distribution of $\omega$ and $\tau$ given $(\eta',y)$, we have
\begin{align*}
\liminf_{m\rightarrow \infty}K(\eta'_{m}, A) &= \liminf_{m\rightarrow \infty} \int_{A}  \tilde{k}(\eta \mid \eta'_{m}) d\eta\\
&= \liminf_{m\rightarrow \infty} \int_{A}  \int_{\mathbb{R}_{+}^{r}} \int_{\mathbb{R}_{+}^{n}} \pi(\eta \mid \omega,\tau,y) \pi(\omega,\tau \mid \eta'_{m},y)d \omega \,d\, \tau \,d\eta\\
&\ge \int_{A}  \int_{\mathbb{R}_{+}^{r}} \int_{\mathbb{R}_{+}^{n}}  \pi(\eta \mid \omega,\tau,y) \liminf_{m\rightarrow \infty} \pi(\omega,\tau \mid \eta'_{m},y) d \omega \,d\, \tau \,d\eta\\
&= \int_{A}  \int_{\mathbb{R}_{+}^{r}} \int_{\mathbb{R}_{+}^{n}}  \pi(\eta \mid \omega,\tau,y) \liminf_{m\rightarrow \infty} [\pi(\omega \mid \eta'_{m},y) \pi(\tau \mid \eta'_{m},y)] d \omega \,d\, \tau \,d\eta.
\end{align*}
Considering the conditions $1$ and $2$ in Theorem \ref{theoremimproper}, for any fixed $(\eta, \omega, y)$, both $\pi(\omega \mid \eta'_{m},y)$ and $\pi(\tau \mid \eta'_{m},y)$ are continuous functions on $\mathbb{R}^{p+q}\backslash N$. Hence, if $\eta'_{m} \rightarrow \eta'$, then $\liminf_{m\rightarrow \infty}K(\eta'_{m}, A) \ge K(\eta', A)$, and we can conclude that the $\tilde{\Psi}$ chain is a Feller Markov chain. Thus, by Lemma $15.2.8$ in \cite{meyn1993markov}, GE of the $\tilde{\Psi}$ chain is proved.

Next, using the similar techniques as in \cite{wang2018convergence}
and \cite[][Lemma 12]{roman2012thesis}, we can establish that the GE of the original chain $\Psi$ follows from that of $\tilde{\Psi}$. We include a proof here for completeness. Let
$\mathsf{X} \equiv \mathbb{R}^{p+q}, \, \tilde{\mathsf{X}} \equiv
\mathbb{R}^{p+q}\backslash N$. Let $K$ and $\tilde{K}$ denote the Mtfs
of $\Psi$ and $\tilde{\Psi}$ chains respectively. Also, since the
Lebesgue measure of $N$ is zero, $\tilde{K}(x,B) = K(x,B)$, for any
$x \in \tilde{\mathsf{X}}$ and
$B \in \B_{\tilde{\mathsf{X}}} = \{ \tilde{\mathsf{X}}\cap A: A\in \B_{\mathsf{X}} \}$, where
$\B_{\mathsf{X}}$ denotes the Borel $\sigma$-algebra of $\mathbb{R}^{p+q}$ and
$\B_{\tilde{\mathsf{X}}}$ denotes the Borel $\sigma$-algebra of
$\mathbb{R}^{p+q}\backslash N$, respectively.

Let $\mu$ and $\tilde{\mu}$ be the Lebesgue measures on $\mathsf{X}$
and $\tilde{\mathsf{X}}$, respectively. As the Mtds are strictly
positive for the two chains, the $\Psi$ chain is $\mu$-irreducible and
the $\tilde{\Psi}$ chain is $\tilde{\mu}$-irreducible. Both chains are
aperiodic. Note that, $\mu$ and $\tilde{\mu}$ are also the maximal
irreducibility measures of $\Psi$ and $\tilde{\Psi}$ chains,
respectively. By Theorem $15.0.1$ in \cite{meyn1993markov}, since we
have established the GE of the $\tilde{\Psi}$ chain, there exists a
$v$-petite set $C \in \B_{\tilde{\mathsf{X}}}$,
$\rho_{C} < 1, \, M_{C} < \infty$ and $\tilde{K}^{\infty}(C) > 0$ such
that $\tilde{\mu}(C) > 0$ and
\begin{align}
\label{eq:absineq}
\abs{\tilde{K}^{m}(x,C) - \tilde{K}^{\infty}(C)} < M_{C} \rho_{C}^{m}, 
\end{align}
for all $x \in C$. Also, it can be shown that 
\begin{align}
\label{eq:mtfequa}
K^{m}(x, B) = \tilde{K}^{m}(x, B \cap \tilde{\mathsf{X}}),
\end{align}
for any $x \in \tilde{\mathsf{X}}$ and $B \in \B_{\mathsf{X}}$. Note
that, $K^{m}$ and $\tilde{K}^{m}$ indicate the corresponding $m$-step
Mtfs. Thus, for $x \in C$, $K ^{m}(x, C) = \tilde{K}^{m}(x, C)$. Then,
\eqref{eq:absineq} becomes
$\abs{K^{m}(x,C) - \tilde{K}^{\infty}(C)} < M_{C} \rho_{C}^{m}.$
Since $\mu(N) = 0$, we have $\mu(C) = \tilde{\mu}(C)$. Recall that
$\tilde{\mu}(C) >0$, thus $\mu(C)>0$. Note that, $C$ is $v$-petite for
the $\tilde{\Psi}$ chain, then for all $x \in C$ and
$B \in \B_{\tilde{\mathsf{X}}}$,
\begin{align}
\label{eq:petitedefi}
\sum_{m=0}^{\infty} \tilde{K}^{m}(x, B) a(m) \ge v(B),
\end{align}
where $v$ is a nontrivial measure on $\B_{\tilde{\mathsf{X}}}$ and $a(m)$ is a mass function on $\{0,1,2,...\}$. It can be shown that a nontrivial measure on $\B_{\mathsf{X}}$, which is 
\begin{align}
\label{eq:vstar}
v^*(\cdot) = v(\cdot \cap \tilde{\mathsf{X}}),
\end{align}
is well defined. Then for any $x \in C$ and any $B \in \B_{\mathsf{X}}$, using \eqref{eq:mtfequa}, \eqref{eq:petitedefi} and \eqref{eq:vstar}, we have
\[
  \sum_{m=0}^{\infty} K^{m}(x, B) a(m) =\sum_{m=0}^{\infty}
  \tilde{K}^{m}(x, B \cap \tilde{\mathsf{X}}) a(m) \ge v(B \cap \tilde{\mathsf{X}})=
  v^*(B).\] Hence, $C$ is also a petite set for the $\Psi$
chain. Applying Theorem $15.0.1$ in \cite{meyn1993markov} again, GE
of the $\Psi$ chain is proved. Hence, we show that GE of
$\tilde{\Psi}$ implies that of the original chain $\Psi$.

\end{proof}

\end{appendix}

\bibliographystyle{ims}
\bibliography{Ref}

\begin{thebibliography}{32}
\expandafter\ifx\csname natexlab\endcsname\relax\def\natexlab#1{#1}\fi
\expandafter\ifx\csname url\endcsname\relax
  \def\url#1{\texttt{#1}}\fi
\expandafter\ifx\csname urlprefix\endcsname\relax\def\urlprefix{URL }\fi
\providecommand{\eprint}[2][]{\url{#2}}

\bibitem[{Albert and Chib(1993)}]{albert1993bayesian}
\textsc{Albert, J.~H.} and \textsc{Chib, S.} (1993).
\newblock Bayesian analysis of binary and polychotomous response data.
\newblock \textit{Journal of the American statistical Association}, \textbf{88}
  669--679.

\bibitem[{Chakraborty and Khare(2017)}]{chak:khar:2017}
\textsc{Chakraborty, S.} and \textsc{Khare, K.} (2017).
\newblock Convergence properties of {G}ibbs samplers for {B}ayesian probit
  regression with proper priors.
\newblock \textit{Electronic Journal of Statistics}, \textbf{11} 177--210.

\bibitem[{Chen et~al.(2002)Chen, Shao and Xu}]{chen2002necessary}
\textsc{Chen, M.-H.}, \textsc{Shao, Q.-M.} and \textsc{Xu, D.} (2002).
\newblock Necessary and sufficient conditions on the properiety of posterior
  distributions for generalized linear mixed models.
\newblock \textit{Sankhy{\=a}: The Indian Journal of Statistics, Series A}
  57--85.

\bibitem[{Chib and Ramamurthy(2010)}]{chib:rama:2010}
\textsc{Chib, S.} and \textsc{Ramamurthy, S.} (2010).
\newblock Tailored randomized block {MCMC} methods with application to {DSGE}
  models.
\newblock \textit{Journal of Econometrics}, \textbf{155} 19--38.

\bibitem[{Choi and Hobert(2013)}]{choi:hobe:2013}
\textsc{Choi, H.~M.} and \textsc{Hobert, J.~P.} (2013).
\newblock The {P}olya-{G}amma {G}ibbs sampler for {B}ayesian logistic
  regression is uniformly ergodic.
\newblock \textit{Electronic Journal of Statistics}, \textbf{7} 2054--2064.

\bibitem[{Cortez and Silva(2008)}]{cort:silv:2008}
\textsc{Cortez, P.} and \textsc{Silva, A. M.~G.} (2008).
\newblock Using data mining to predict secondary school student performance.
\newblock In \textit{Proceedings of 5th Annual Future Business Technology
  Conference} (A.~Brito and J.~Teixeira, eds.). EUROSIS-ETI, 5--12.

\bibitem[{Devroye(2009)}]{devroye2009exact}
\textsc{Devroye, L.} (2009).
\newblock On exact simulation algorithms for some distributions related to
  {J}acobi theta functions.
\newblock \textit{Statistics \& Probability Letters}, \textbf{79} 2251--2259.

\bibitem[{Evangelou and Roy(2019)}]{evan:roy:2019}
\textsc{Evangelou, E.} and \textsc{Roy, V.} (2019).
\newblock Estimation and prediction for spatial generalized linear mixed models
  with parametric links via reparameterized importance sampling.
\newblock \textit{Spatial statistics}, \textbf{29} 289--315.

\bibitem[{Fr{\"u}hwirth-Schnatter and Fr{\"u}hwirth(2010)}]{fruhwirth2010data}
\textsc{Fr{\"u}hwirth-Schnatter, S.} and \textsc{Fr{\"u}hwirth, R.} (2010).
\newblock Data augmentation and {M}{C}{M}{C} for binary and multinomial logit
  models.
\newblock In \textit{Statistical Modelling and Regression Structures}.
  Springer, 111--132.

\bibitem[{Holmes and Held(2006)}]{holmes2006bayesian}
\textsc{Holmes, C.~C.} and \textsc{Held, L.} (2006).
\newblock Bayesian auxiliary variable models for binary and multinomial
  regression.
\newblock \textit{Bayesian Analysis}, \textbf{1} 145--168.

\bibitem[{Jones and Hobert(2001)}]{jones2001honest}
\textsc{Jones, G.~L.} and \textsc{Hobert, J.~P.} (2001).
\newblock Honest exploration of intractable probability distributions via
  {M}arkov chain {M}onte {C}arlo.
\newblock \textit{Statistical Science}, \textbf{16} 312--334.

\bibitem[{Liu et~al.(1994)Liu, Wong and Kong}]{liu1994covariance}
\textsc{Liu, J.~S.}, \textsc{Wong, W.~H.} and \textsc{Kong, A.} (1994).
\newblock Covariance structure of the {G}ibbs sampler with applications to the
  comparisons of estimators and augmentation schemes.
\newblock \textit{Biometrika}, \textbf{81} 27--40.

\bibitem[{Meyn and Tweedie(1993)}]{meyn1993markov}
\textsc{Meyn, S.~P.} and \textsc{Tweedie, R.~L.} (1993).
\newblock \textit{Markov chains and stochastic stability}.
\newblock Springer.

\bibitem[{Olver et~al.(2010)Olver, Lozier, F. and Clark}]{olver2010math}
\textsc{Olver, F.~W.}, \textsc{Lozier, D.~W.}, \textsc{F., B.~R.} and
  \textsc{Clark, C.~W.} (2010).
\newblock \textit{NIST Handbook of Mathematical Functions}.
\newblock Cambridge University Press.

\bibitem[{Polson et~al.(2013)Polson, Scott and Windle}]{polson2013bayesian}
\textsc{Polson, N.~G.}, \textsc{Scott, J.~G.} and \textsc{Windle, J.} (2013).
\newblock Bayesian inference for logistic models using {P}{\'o}lya-{G}amma
  latent variables.
\newblock \textit{Journal of the American statistical Association},
  \textbf{108} 1339--1349.

\bibitem[{Roberts and Rosenthal(1997)}]{roberts1997geometric}
\textsc{Roberts, G.~O.} and \textsc{Rosenthal, J.~S.} (1997).
\newblock Geometric ergodicity and hybrid {M}arkov chains.
\newblock \textit{Electronic Communications in Probability}, \textbf{2} 13--25.

\bibitem[{Roberts and Rosenthal(2001)}]{robe:rose:2001}
\textsc{Roberts, G.~O.} and \textsc{Rosenthal, J.~S.} (2001).
\newblock Markov chains and de-initializing processes.
\newblock \textit{Scandinavian Journal of Statistics}, \textbf{28} 489--504.

\bibitem[{Roberts and Rosenthal(2004)}]{roberts2004general}
\textsc{Roberts, G.~O.} and \textsc{Rosenthal, J.~S.} (2004).
\newblock General state space {M}arkov chains and {MCMC} algorithms.
\newblock \textit{Probability Surveys}, \textbf{1} 20--71.

\bibitem[{Roberts and Sahu(1997)}]{robe:sahu:1997}
\textsc{Roberts, G.~O.} and \textsc{Sahu, S.~K.} (1997).
\newblock Updating schemes, correlation structure, blocking and
  parameterization for the {G}ibbs sampler.
\newblock \textit{Journal of the Royal Statistical Society: Series B
  (Statistical Methodology)}, \textbf{59} 291--317.

\bibitem[{Rom{\'a}n(2012)}]{roman2012thesis}
\textsc{Rom{\'a}n, J.~C.} (2012).
\newblock \textit{Convergence analysis of block Gibbs samplers for Bayesian
  general linear mixed models}.
\newblock Ph.D. thesis, University of Florida.

\bibitem[{Rom{\'a}n and Hobert(2012)}]{roman2012convergence}
\textsc{Rom{\'a}n, J.~C.} and \textsc{Hobert, J.~P.} (2012).
\newblock Convergence analysis of the {G}ibbs sampler for {B}ayesian general
  linear mixed models with improper priors.
\newblock \textit{The Annals of Statistics}, \textbf{40} 2823--2849.

\bibitem[{Roy(2012{\natexlab{a}})}]{roy2012convergence}
\textsc{Roy, V.} (2012{\natexlab{a}}).
\newblock Convergence rates for {MCMC} algorithms for a robust {B}ayesian
  binary regression model.
\newblock \textit{Electronic Journal of Statistics}, \textbf{6} 2463--2485.

\bibitem[{Roy(2012{\natexlab{b}})}]{roy:2012a}
\textsc{Roy, V.} (2012{\natexlab{b}}).
\newblock Spectral analytic comparisons for data augmentation.
\newblock \textit{Stat. and Prob. Letters}, \textbf{82} 103--108.

\bibitem[{Roy(2020)}]{roy:2020}
\textsc{Roy, V.} (2020).
\newblock Convergence diagnostics for {M}arkov chain {M}onte {C}arlo.
\newblock \textit{Annual Review of Statistics and Its Application}, \textbf{7}
  387--412.

\bibitem[{Roy and Hobert(2007)}]{roy2007convergence}
\textsc{Roy, V.} and \textsc{Hobert, J.~P.} (2007).
\newblock Convergence rates and asymptotic standard errors for {M}arkov chain
  {M}onte {C}arlo algorithms for {B}ayesian probit regression.
\newblock \textit{Journal of the Royal Statistical Society: Series B},
  \textbf{69} 607--623.

\bibitem[{Turek et~al.(2017)Turek, de~Valpine, Paciorek and
  Anderson-Bergman}]{ture:deva:paci:2017}
\textsc{Turek, D.}, \textsc{de~Valpine, P.}, \textsc{Paciorek, C.~J.} and
  \textsc{Anderson-Bergman, C.} (2017).
\newblock Automated parameter blocking for efficient {M}arkov chain {M}onte
  {C}arlo sampling.
\newblock \textit{Bayesian Analysis}, \textbf{12} 465--490.

\bibitem[{van Dyk and Meng(2001)}]{vand:meng:2001}
\textsc{van Dyk, D.~A.} and \textsc{Meng, X.-L.} (2001).
\newblock The art of data augmentation (with discussion).
\newblock \textit{Journal of Computational and Graphical Statistics},
  \textbf{10} 1--50.

\bibitem[{Vats et~al.(2018)Vats, Flegal and Jones}]{vats2018strong}
\textsc{Vats, D.}, \textsc{Flegal, J.~M.} and \textsc{Jones, G.~L.} (2018).
\newblock Strong consistency of multivariate spectral variance estimators in
  {M}arkov chain {M}onte {C}arlo.
\newblock \textit{Bernoulli}, \textbf{24} 1860--1909.

\bibitem[{Vats et~al.(2019)Vats, Flegal and Jones}]{vats2019multivariate}
\textsc{Vats, D.}, \textsc{Flegal, J.~M.} and \textsc{Jones, G.~L.} (2019).
\newblock Multivariate output analysis for {M}arkov chain {M}onte {C}arlo.
\newblock \textit{Biometrika}, \textbf{106} 321--337.

\bibitem[{Wang and Roy(2018{\natexlab{a}})}]{wang2018analysis}
\textsc{Wang, X.} and \textsc{Roy, V.} (2018{\natexlab{a}}).
\newblock Analysis of the {P}{\'o}lya-{G}amma block {G}ibbs sampler for
  {B}ayesian logistic linear mixed models.
\newblock \textit{Statistics \& Probability Letters}, \textbf{137} 251--256.

\bibitem[{Wang and Roy(2018{\natexlab{b}})}]{wang2018convergence}
\textsc{Wang, X.} and \textsc{Roy, V.} (2018{\natexlab{b}}).
\newblock Convergence analysis of the block {G}ibbs sampler for {B}ayesian
  probit linear mixed models with improper priors.
\newblock \textit{Electronic Journal of Statistics}, \textbf{12} 4412--4439.

\bibitem[{Wang and Roy(2018{\natexlab{c}})}]{wang2018geometric}
\textsc{Wang, X.} and \textsc{Roy, V.} (2018{\natexlab{c}}).
\newblock Geometric ergodicity of {P}{\'o}lya-{G}amma {G}ibbs sampler for
  {B}ayesian logistic regression with a flat prior.
\newblock \textit{Electronic Journal of Statistics}, \textbf{12} 3295--3311.

\end{thebibliography}


%
%


\end{document}